\newtheorem{thm}{Theorem}
\newtheorem{prop}{Proposition}
\newtheorem{lem}{Lemma}
\begin{document}
    
    \title{Hybrid Analog-Digital Beamforming for Massive MIMO Systems}
    
    \author{\thanks{
            This project has received funding from the European Union's Horizon 2020 research and innovation program under grant agreement No. 646804-ERC-COG-BNYQ, and from the Israel Science Foundation under Grant no. 335/14.}
        Shahar Stein, \emph{Student
            IEEE} and Yonina C. Eldar, \emph{Fellow IEEE} }
    
    \maketitle
    
    \IEEEpeerreviewmaketitle{}

    \begin{abstract}
	In massive MIMO systems, hybrid beamforming is an essential technique for exploiting the potential array gain without using a dedicated RF chain for each antenna.
	In this work, we consider the data phase in a massive MIMO communication process, where the transmitter and receiver use fewer RF chains than antennas.
	We examine several different fully- and partially connected schemes and consider the design of hybrid beamformers that minimize the estimation error in the data.
	For the hybrid precoder, we introduce a framework for approximating the optimal fully-digital precoder with a feasible hybrid one. 
	We exploit the fact that the fully-digital precoder is unique only up to a unitary matrix and optimize over this matrix and the hybrid precoder alternately. Our alternating minimization of approximation gap (Alt-MaG) framework improves the performance over state-of-the-art methods with no substantial increase in complexity. 
	In addition, we present a special case of Alt-MaG, minimal gap iterative quantization (MaGiQ), that results in low complexity and lower mean squared error (MSE) than other common methods, in the case of very few RF chains.
	MaGiQ is shown to coincide with the optimal fully-digital solution in some scenarios.
	For combiner design, we exploit the structure of the MSE objective and develop a greedy ratio trace maximization technique, that achieves low MSE under various settings. 
	All of our algorithms can be used with multiple hardware architectures.
\end{abstract}

\section{Introduction}\label{sec:Intro}

Massive MIMO wireless systems have emerged as a leading candidate for 5G wireless access \cite{andrews_what_2014,rusek_scaling_2013}. Along with mmWave technologies, that were recently recognized as essential for coping with the spectrum crunch \cite{rappaport_millimeter_2013}, it offers higher data rates and capacities than traditional MIMO systems. The use of large-scale antenna arrays at both the transmitter and receiver holds the potential for higher array gain than before. To utilize this gain, precoding and combining techniques are used. Traditionally implemented in the baseband (BB), these methods require a dedicated RF hardware per antenna. Unfortunately, when taking a massive amount of antennas into account, this results in a huge computational load and cost, as the RF components are expensive and have high power consumption, especially for mmWave technologies. Hence, it is desirable to design economical hardware that will utilize the potential gain from a large number of cheap antenna elements using a small number of expensive RF chains.

To achieve this goal, several hybrid analog-digital schemes have been suggested \cite{ayach_spatially_2014,alkhateeb_hybrid_2013,lee_hybrid_2015,han_large-scale_2015,mendez-rial_hybrid_2016,kim_mse-based_2015,yu_alternating_2016,singh_feasibility_2015,yu_partially-connected_2017,park_dynamic_2017}. In hybrid precoding and combining, the operations are split between the digital and analog domains: at the transmitter side a low dimensional digital precoder operates on the transmitted signal at BB. An analog precoder then maps the small number of digital outputs to a large number of antennas, and the same is performed at the receiver side. Common analog architectures are based on analog phase shifters and switches. The specific schemes can vary according to the power and area budget. Two main families of architectures are the fully- and partially connected structures. 
Fully connected networks offer a mapping from each antenna to each RF chain and allow to maximize the precoding and combining gain. In the partially connected scheme, a reduced number of analog components is used. This degrades the achieved gain but offers lower power consumption and hardware complexity. 

In the data phase of each coherence interval of the MIMO communication process, the transmitter sends multiple data streams to the receiver over the constant and known channel, using precoding techniques. At the receiver side, a combiner is used to estimate the data vector from the received signal at the antennas. The precoder and combiner are chosen to optimize some desired performance measure, such as estimation error or spectral efficiency of the system.      
Unlike the fully-digital case, when considering a hybrid beamformer, the precoder and combiner matrices cannot have arbitrary entries, but are constrained according to the specific hardware choice.  
For example, when using a phase shifter network at the analog side, only unimodular matrices for the analog beamformer are considered. The goal then is to optimize the performance measure over all pairs of digital and analog precoder matrices, which yields a non-convex difficult optimization problem.

The majority of past works considered the fully-connected phase shifter network. It was suggested in \cite{ayach_spatially_2014} to separate the joint precoder and combiner design problem into two subproblems and solve each independently. Although this approach is sub-optimal, it greatly simplifies the difficult joint optimization problem. For the precoder, it was shown that minimizing the gap between the hybrid precoder and the optimal fully-digital one over all hybrid precoders, approximately leads to the maximization of the system's spectral efficiency. On the combiner side, minimizing the mean squared error (MSE) over all hybrid combiners was shown to be equivalent to minimizing the weighted approximation gap between the fully-digital combiner and the hybrid one, with the weights given by the received signal correlation. Methods for solving these approximation problems were suggested in \cite{ayach_spatially_2014,alkhateeb_hybrid_2013,lee_hybrid_2015,kim_mse-based_2015,mendez-rial_hybrid_2016,yu_alternating_2016,yu_partially-connected_2017}.

The works \cite{ayach_spatially_2014,alkhateeb_hybrid_2013,lee_hybrid_2015} considered precoder design and aimed at maximizing the system's spectral efficiency. They exploit the mmWave sparse multipath channel structure and deduce that the optimal fully-digital precoder is composed of a small sum of steering vectors. They then suggest a variant of the orthogonal matching pursuit (OMP) \cite{CSBook} algorithm to construct a feasible precoder that approximates the optimal one using a dictionary of steering vectors. This solution greatly reduces the problem complexity but results in a large performance gap from the optimal precoder due to restricting the space of possible precoder vectors to steering vectors. 

Similar approaches were taken in \cite{ayach_spatially_2014,kim_mse-based_2015,mendez-rial_hybrid_2016} for the combiner design, under a MSE or spectral efficiency objective.
In \cite{yu_alternating_2016}, two algorithms were suggested to approximate the optimal fully-digital precoder with a feasible one. The first, MO-AltMin, is based on manifold optimization. In each iteration of the algorithm, it assumes a given digital precoder and develops a conjugate gradient method to find an analog precoder that is a local minimizer of the approximation gap from the fully-digital one. Next, the digital precoder is computed using a least squares solution. This method achieves good performance but suffers from high complexity and run time, and is only suitable for fully-connected phase shifter networks. The second approach is a low complexity algorithm that assumes the digital precoder to be a scaled unitary matrix and uses this assumption to produce an upper bound on the approximation gap, which is then minimized over all analog precoders. However, limiting the combiner to such a structure results in performance loss.

For the partially connected architecture, most of the existing works concentrated on fixed sub-arrays where each RF chain is connected to a predetermined sub-array.
In \cite{singh_feasibility_2015}, the authors suggested a low complexity codebook design producing a small dictionary of feasible precoding vectors, that are chosen based on the transmitted signal strength. They then exhaustively search over all possible combinations from the small dictionary to maximize the mutual information between the receiver and transmitter. This last step can result in heavy computational load when the channel is not sparse.
In \cite{yu_alternating_2016}, the authors consider disjoint sub-arrays and suggest an iterative method to approximate the optimal fully-digital precoder, which optimizes over the analog and digital combiners alternately. It is shown that for disjoint arrays, the analog precoder problem is separable in the antennas, and has a closed form solution. For the digital precoder, a semidefinite relaxation is suggested. However, this is both computationally heavy and unnecessary since a closed form solution for the digital combiner is available. This alternating minimization approach still results in large performance gap.
Some less restrictive schemes were considered in \cite{yu_partially-connected_2017,park_dynamic_2017}. In \cite{yu_partially-connected_2017}, a double phase shifter with dynamic mapping is considered. The use of two phase shifters per antenna relaxes the unit modulo constraint that limits the previous solutions, but costs twice the power consumption, area and complexity. In \cite{park_dynamic_2017}, a dynamic sub-array approach is considered for orthogonal frequency-division multiplexing (OFDM), and a greedy algorithm to optimize the array partition based on the long-term channel characteristics is suggested. However, this method is relevant only for OFDM transmissions.

Most of the above works concentrated on specific channel models and hardware architectures and suggested a tailored algorithm for the chosen scenario. In this work, we develop a general framework suitable for various channels and hardware: full and partial networks with different amounts of phase shifters and switches. We consider the data estimation problem in a single user massive MIMO system where both the receiver and transmitter are equipped with large antenna arrays and fewer RF chains than antennas.
We assume a Bayesian model where the data and interference are both random, and aim at minimizing the MSE of the transmitted data from the low dimensional received digital signal at the receiver, over all hybrid precoders and combiners, assuming a fixed number of RF chains. Like previous works, we relax the difficult joint optimization to two separate problems in the precoder and combiner. 

To design the precoder, we present a framework for approximating the optimal fully-digital precoder with a feasible hybrid one. Our alternating minimization of approximation gap (Alt-MaG) method, exploits the fact that there exists an infinite set of optimal precoders, which differ by a unitary matrix. We suggest optimizing over this matrix and the hybrid precoder alternately, to find the fully-digital solution that results in the smallest approximation gap from its hybrid decomposition. 
For the hybrid precoder optimization step, any of the previously suggested methods \cite{alkhateeb_hybrid_2013,ayach_spatially_2014,kim_mse-based_2015,lee_hybrid_2015,mendez-rial_hybrid_2016,yu_alternating_2016,yu_partially-connected_2017} may be used, according to the hardware constraints. By optimizing over the unitary matrix as well, Alt-Mag achieves additional reduction in MSE compared to  state-of-the-art algorithms, with no significant increase in complexity.

We then present a simple possible solution for to hybrid precoder optimization step, that results in a low complexity algorithm, termed minimal gap iterative quantization (MaGiQ). In each iteration of MaGiQ, the fully-digital solution is approximated using the analog precoder alone. This results in a closed form solution, given by a simple quantization function which depends on the hardware structure. We demonstrate in simulations that MaGiQ achieves lower MSE than other low complexity algorithms when using very few RF chains. In addition, in some specific cases, it coincides with the optimal fully-digital solution.

Next, we show that MaGiQ can also be used for the combiner design with mild adjustments. We then suggest an additional greedy ratio trace maximization (GRTM) algorithm that directly minimizes the estimation error using a suitable dictionary that is chosen according to the hardware scheme. In simulations, we demonstrate that GRTM enjoys good performance and short running time, especially when the number of RF chains increases.

The rest of this paper is organized as follows: In Section~\ref{sec:SignalModel} we introduce the signal model and problem formulation, and review common combiner hardware schemes. 
In Section~\ref{sec:DataEst} we introduce the massive MIMO data estimation problem and derive the relevant MSE minimization over the precoder and combiner. 
Sections~\ref{sec:Precoder} and \ref{sec:Combiner} consider the precoder and combiner design problems, respectively. In Section~\ref{sec:simulations} we evaluate the proposed algorithms using numerical experiments under different scenarios. 

The following notations are used throughout the paper: boldface upper-case $\boldsymbol{X}$ and lower-case $\boldsymbol{x}$ letters are used to denote matrices and vectors respectively, and non-bold letters $x$ are scalars.  
The $i$th element in the $j$th column of $\boldsymbol{X}$ is $\left[\boldsymbol{X}\right]_{ij}$. We let $\boldsymbol{I}$ denote the identity matrix of suitable size, $\boldsymbol{X}^{T},\boldsymbol{X}^{*}$ the transpose and conjugate-transpose of $\boldsymbol{X}$ respectively, $\mathbb{E}\left\{\cdot\right\}$ the expectation and $\Vert\cdot\Vert_{p}^2$, $\Vert\cdot\Vert_{F}^2$ the $\ell_p$ and Forbenius norms. The determinant of $\boldsymbol{X}$ is $\left|\boldsymbol{X}\right|$, and $\mathcal{CN}\left(\boldsymbol{x},\boldsymbol{X}\right)$ is the complex-Gaussian distribution with mean $\boldsymbol{x}$ and covariance matrix $\boldsymbol{X}$. The real part of a variable is denoted as $\Re\left\{\cdot\right\}$, $\mathcal{R}\left(\boldsymbol{X}\right)$ is the range space of $\boldsymbol{X}$ and $\boldsymbol{P}_{\boldsymbol{X}}$ is the orthogonal projection onto $\mathcal{R}\left(\boldsymbol{X}\right)$. 

\section{Problem Formulation}\label{sec:SignalModel}

\subsection{Signal Model}
Consider a single user massive MIMO system in which a transmitter with $N_t$ antennas communicates $N_s$ independent data streams using $N_{RF}^t$ RF chains, $N_s \leq N_{RF}^{t}\leq N_{t}$, to a receiver equipped with $N_r$ antennas and $N_{RF}^r$ RF chains, $N_s \leq N_{RF}^{r}\leq N_{r}$. At the transmitter, the RF chains are followed by a network of switches and phase shifters that expands the $N_{RF}$ digital outputs to $N_{t}$ precoded analog signals which feed the transmit antennas. Similarly, at the receiver, the antennas are followed by a network of switches and phase shifters that feed the $N_{RF}^{r}$ RF chains. The specific architecture of the analog hardware at each end can vary according to budget constraints. Some possible choices are presented in the next subsection. For simplicity, we assume in this paper that $N_{RF}^{t}=N_{RF}^{r}=N_{s}$, which corresponds to the minimal possible number of chains, and hence the worst-case scenario. 

The hybrid architecture enables the transmitter to apply a BB precoder $\boldsymbol{F}_{BB}\in\mathbb{C}^{N_{RF}^{t}\times N_{RF}^{t}}$, followed by a RF precoder $\boldsymbol{F}_{RF}\in\mathcal{F}^{N_{t}\times N_{RF}^{t}}$. The properties of the set $\mathcal{F}$ are determined by the specific hardware scheme in use. For example, 
in fully-connected phase shifter networks, $\mathcal{F}$ is the set of unimodular matrices. 
The transmitter obeys a total power constraint such that $\Vert\boldsymbol{F}_{RF}\boldsymbol{F}_{BB}\Vert_{F}^{2}=N_s$. 

The discrete-time $N_{t} \times 1$ transmitted signal can be written as
\begin{equation}
\boldsymbol{x}=\boldsymbol{F}_{RF}\boldsymbol{F}_{BB}\boldsymbol{s},
\end{equation}  
with $\boldsymbol{s}$  the $N_{s}\times1$ symbol vector. We assume, without loss of generality, that $\boldsymbol{s}\sim\mathcal{N}\left(\boldsymbol{0},\boldsymbol{I}_{N_{s}}\right)$. The signal is transmitted over a narrowband block-fading propagation channel such that the $N_r\times1$ received analog signal vector $\boldsymbol{y}$ at the receiver's antennas is
\begin{equation}
\boldsymbol{y}=\sqrt{p_r}\boldsymbol{H}\boldsymbol{F}_{RF}\boldsymbol{F}_{BB}\boldsymbol{s}+\boldsymbol{z},
\end{equation} 
where $\boldsymbol{H}$ is the $N_{r}\times N_{t}$ channel matrix, $p_r$ is the received average power and $\boldsymbol{z}$ is an $N_{r}\times1$ interference vector with zero mean and autocorrelation matrix $\mathbb{E}\left[\boldsymbol{z}\boldsymbol{z}^{H}\right]=\boldsymbol{R}_{z}$. Here, $\boldsymbol{z}$ can represent either noise or an interfering signal. We assume full channel state information (CSI), i.e. the matrix $\boldsymbol{H}$ is known. The interference correlation matrix $\boldsymbol{R}_{z}$ is also known and assumed to be full rank.

At the receiver, an analog combiner maps the $N_r$ inputs to $N_{RF}^r$ RF chains that are then processed at baseband using a digital combiner. This yields the discrete signal
\begin{equation}
\boldsymbol{r}=\sqrt{p_r}\boldsymbol{W}_{BB}^{*}\boldsymbol{W}_{RF}^{*}\boldsymbol{H}\boldsymbol{F}_{RF}\boldsymbol{F}_{BB}\boldsymbol{s}+\boldsymbol{W}_{BB}^{*}\boldsymbol{W}_{RF}^{*}\boldsymbol{z},
\end{equation}  
with $\boldsymbol{W}_{RF}\in\mathcal{W}^{N_{r}\times N_{RF}^{r}}$ the analog combining matrix and $\boldsymbol{W}_{BB}$ the $N_{RF}^r\times N_{RF}^r$ digital combining matrix. The properties of the set $\mathcal{W}$ vary according to the specific analog hardware scheme. 

We wish to design $\boldsymbol{W}_{RF},\boldsymbol{W}_{BB},\boldsymbol{F}_{RF},\boldsymbol{F}_{BB}$ to minimize the estimation error of $\boldsymbol{s}$ from $\boldsymbol{r}$, under different hardware constraints, i.e different feasible sets $\mathcal{W,F}$. 
Thus, we consider the problem
\begin{equation}
\begin{aligned}
\underset{_{\boldsymbol{W}_{RF},\boldsymbol{W}_{BB}\boldsymbol{F}_{RF},\boldsymbol{F}_{BB}}}{\min}&
\mathbb{E}\left[\Vert \boldsymbol{s}-\boldsymbol{W}_{BB}^*\boldsymbol{W}_{RF}^*\boldsymbol{y}\Vert_{2}^2\right]&\\
\text{s.t.}:\;\;\;\;\;\;\;\;\;\;\;\;&
\boldsymbol{F}_{RF}\in\mathcal{F}^{N_{t}\times N_{RF}^{t}},\;\;
\Vert\boldsymbol{F}_{RF}\boldsymbol{F}_{BB}\Vert_{F}^{2}\leq N_{s},& \\
&\boldsymbol{W}_{RF}\in\mathcal{W}^{N_{r}\times N_{RF}^{r}}.\;\;\;\;\;\;\;\;\;\;\;\;\;\;\;\;\;\;\;\;\;\;\;\;\;\;\;\;\;\;\;\;\;\;&
\end{aligned}
\label{eq:Opt0}
\end{equation}

\subsection{Precoder and Combiner Hardware Schemes}
We now present some possible hardware schemes and the feasible sets they dictate. We focus on the combiner architecture and the feasible set $\mathcal{W}$, but all the options presented here can be translated to similar precoder architectures and feasible sets $\mathcal{F}$. 
The networks below are based on phase shifters and/or switches, with different connectivity levels. In practice, the choice of scheme can be affected by a variety of budget constraints such as power consumption, price, and area against the requested array gain. For example,
phase shifters allow for a more flexible design than switches, but their power consumption is higher.   
\begin{enumerate}[(S1)]
	\item \textbf{Fully Connected Phase Shifters and Switches Network:} Each antenna is connected to an RF chain through an independent on/off switch followed by a phase shifter.
	All the incoming analog signals from the different antennas are combined before feeding the RF chain. In this architecture, the entries of the analog combining matrix $\boldsymbol{W}_{RF}$ can be either zeros or unit modulus. Hence the set $\mathcal{W}$ is defined by 
	$\mathcal{W}^{N_{r}\times N_{RF}^{r}}=\left\{\boldsymbol{W}\in\mathbb{C}^{N_{r}\times N_{RF}^{r}}:\left|w_{ij}\right|\in\left\{0,1\right\}\right\}$. The orthogonal projection $\boldsymbol{P}_{\mathcal{W}}$ onto the feasible set dictated by this scheme is defined as
	\begin{equation}
	\left[\boldsymbol{P}_{\mathcal{W}}\left(\boldsymbol{A}\right)\right]_{il} = \begin{cases}
	e^{j2\pi{\angle\left[\boldsymbol{A}\right]_{il}}}, &\left|\left[\boldsymbol{A}\right]_{il}\right|\geq \frac{1}{2}\\
	0, &\left|\left[\boldsymbol{A}\right]_{il}\right|< \frac{1}{2}\\
	\end{cases}. \label{eq:Qs2}
	\end{equation}
	The fully-connected phase shifters and switches network is presented in Fig. \ref{fig:S1}.\label{scheme:FullyFirst}
	
	\item \textbf{Fully Connected Phase Shifters Network:} (architecture A1 in \cite{mendez-rial_hybrid_2016}). In this case, each antenna is connected to an RF chain through an independent phase shifter, without a switch. 
	Thus, $\boldsymbol{W}_{RF}$ is a unimodular matrix: $\mathcal{W}^{N_{r}\times N_{RF}^{r}}=\left\{\boldsymbol{W}\in\mathbb{C}^{N_{r}\times N_{RF}^{r}}:\left|w_{ij}\right|=1\right\}$. Here we have
	\begin{equation}
	\left[\boldsymbol{P}_{\mathcal{W}}\left(\boldsymbol{A}\right)\right]_{il} = e^{j2\pi{\angle\left[\boldsymbol{A}\right]_{il}}} .\label{eq:Qs1}
	\end{equation}
	Figure \ref{fig:S2}.\label{scheme:FullyPhaseOnly} demonstrates this setting.
	
	\item \textbf{Switching Network:} (architecture A5 in \cite{mendez-rial_hybrid_2016}). This is an antenna selection scheme, where each RF chain is preceded by a single switch that can toggle between the $N_{r}$ antennas. A total of $N_{RF}^{r}$ antennas are selected and sampled. 
	It follows that $\boldsymbol{W}_{RF}$ is partial permutation matrix: $\mathcal{W}^{N_{r}\times N_{RF}^{r}}=\left\{\boldsymbol{W}\in\mathbb{C}^{N_{r}\times N_{RF}^{r}}:w_{ij}\in\left\{0,1\right\},\Vert\boldsymbol{w}_{j}\Vert_{0}=1\right\}$. The projection $\boldsymbol{P}_{\mathcal{W}}\left(\boldsymbol{A}\right)$ chooses the entry with largest absolute value in each column of $\boldsymbol{A}$ and sets the corresponding entry of the output to $1$, while all other entries are set to $0$. 
	This setting is shown in Fig. \ref{fig:S3}. Another structure is the switching network with sub-arrays (architecture A6 in \cite{mendez-rial_hybrid_2016}), where the switching range is limited to a certain subset of antennas for each RF chain. 
	Note that in switching networks, phase shifters are unnecessary, since each input can be processed independently at BB.\label{scheme:Switching}
	
	\item \textbf{Partially Connected Phase Shifters Network with Fixed Sub Arrays:} (architecture A2 in \cite{mendez-rial_hybrid_2016}). This scheme divides the array into (possibly overlapping) sub-arrays of size $G$. Each sub-array operates as a fully-connected phase shifter network with a single RF chain. 
	This results in the feasible set $\mathcal{W}^{N_{r}\times N_{RF}^{r}}=\left\{\boldsymbol{W}\in\mathbb{C}^{N_{r}\times N_{RF}^{r}}:\left|w_{ij}\right|=1,\forall i\in\mathcal{S}_{j},\left|w_{ij}\right|=0,\forall i\notin\mathcal{S}_{j}\right\}$, with $\mathcal{S}_{j}$ the group of indices corresponding to antennas that belong to the $j$th sub-array. 
	In this case $\boldsymbol{P}_{\mathcal{W}}\left(\boldsymbol{A}\right)$ applies the mapping in (\ref{eq:Qs1}) to the $G$ entries of $\boldsymbol{a}_j$ that belong to $\mathcal{S}_{j}$, and sets all others to $0$. The fixed sub-arrays network is presented in Fig. \ref{fig:S4}. 
	The parameter $G$ represents the connectivity factor; for $G=N_{r}$, we get the fully-connected network S\ref{scheme:FullyPhaseOnly}. 
	\label{scheme:PhaseSubArrays}
	
	\item \textbf{Flexible Partially Connected Phase Shifters Network with Sub Arrays:} This architecture is similar to the previous one, except that the contributing antennas to each RF chain are not fixed, but can be optimized, i.e. the subgroups $\mathcal{S}_{j}$ are flexible. 
	This is implemented using a $N_r$-to-$1$ switch that precedes the phase shifters. 
	The corresponding feasible set is $\mathcal{W}^{N_{r}\times N_{RF}^{r}}=\left\{\boldsymbol{W}\in\mathbb{C}^{N_{r}\times N_{RF}^{r}}:\left|w_{ij}\right|\in\left\{0,1\right\},\Vert\boldsymbol{w}_{j}\Vert_{0}=G\right\}$.
	The function $\boldsymbol{P}_{\mathcal{W}}\left(\boldsymbol{A}\right)$ in this case chooses the $G$ entries with largest absolute value in each column of $\boldsymbol{A}$ and applies (\ref{eq:Qs1}) to them while setting all others to $0$.
	\label{scheme:FlexiLast}
\end{enumerate}

The networks above constitute some of the main classes of existing analog architectures. They present different levels of challenge in terms of hardware complexity, as in Table \ref{table:hardwareComlex}. For a complete power consumption comparison, the reader is referred to the example in \cite{mendez-rial_hybrid_2016}. One may define many additional architectures based on the above networks by modifying their components and connectivity factors.

\begin{table*}[]
	\centering
	\caption{Hardware complexity comparison between different architectures}
	\label{table:hardwareComlex}
	\begin{tabular}{|l|m{1.5cm}|m{1.5cm}|m{1.5cm}|m{1.5cm}|m{1.5cm}|}
		\hline
		\rowcolor[HTML]{EFEFEF} 
		& \multicolumn{1}{l|}{\cellcolor[HTML]{EFEFEF}\textbf{S1}} & \multicolumn{1}{l|}{\cellcolor[HTML]{EFEFEF}\textbf{S2}} & \multicolumn{1}{l|}{\cellcolor[HTML]{EFEFEF}\textbf{S3}} & \multicolumn{1}{l|}{\cellcolor[HTML]{EFEFEF}\textbf{S4}} & \multicolumn{1}{l|}{\cellcolor[HTML]{EFEFEF}\textbf{S5}} \\ \hline
		\cellcolor[HTML]{EFEFEF}{\color[HTML]{333333} \textbf{Phase Shifters \#}} & $N_{r}\cdot N_{RF}^r$ & $N_{r}\cdot N_{RF}^r$ & 0 & $G\cdot N_{RF}^r$ & $G\cdot N_{RF}^r$ \\ \hline
		\cellcolor[HTML]{EFEFEF}{\color[HTML]{333333} \textbf{Switches \#}} & $N_{r}\cdot N_{RF}^r$ & 0 & $N_{RF}^r$ & 0 & $G\cdot N_{RF}^r$ \\ \hline
		\cellcolor[HTML]{EFEFEF}{\color[HTML]{333333} \textbf{Switch Type}} & on-off & - & $N_r$-to-$1$ & - & $N_r$-to-$1$ \\ \hline
	\end{tabular}
\end{table*}

\begin{figure}
	\begin{center}
		\begin{subfigure}[a]{0.49\linewidth}
			\centering      
			\vspace{-1cm}          
			\includegraphics[width=0.9\linewidth]{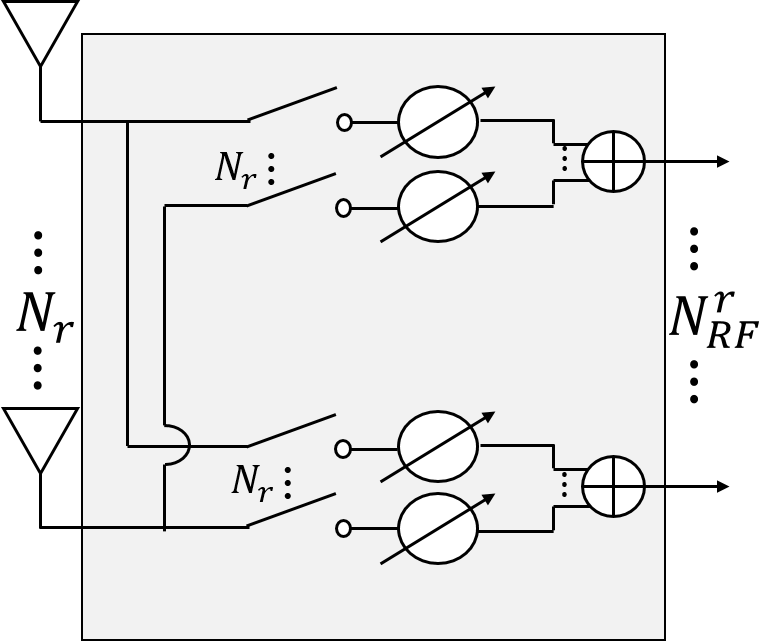}
			\subcaption{}\label{fig:S1}
			\vspace{0.6cm}    
		\end{subfigure}
		\begin{subfigure}[b]{0.49\linewidth}
			\centering      
			\vspace{0.8cm}                
			\includegraphics[width=0.9\linewidth]{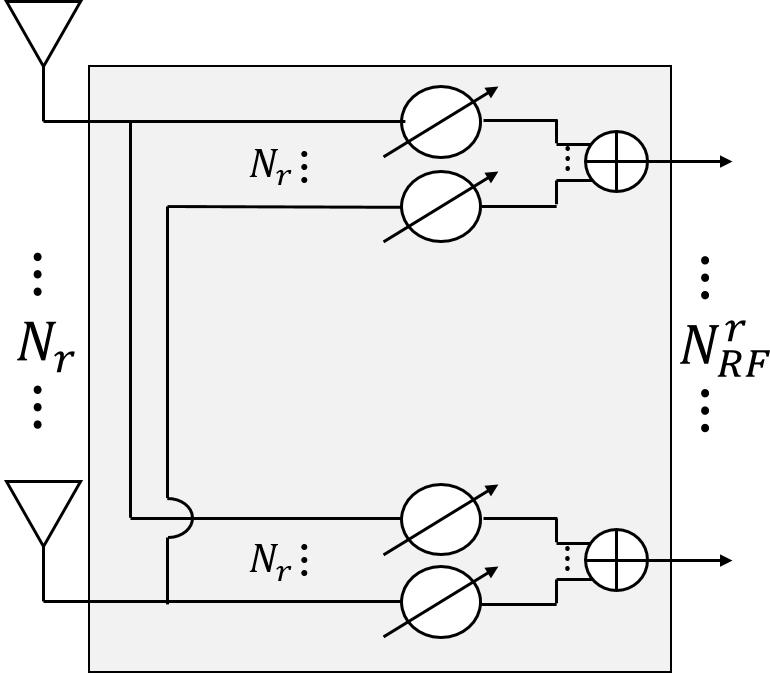}
			\subcaption{}\label{fig:S2}
			\vspace{-1cm}    
		\end{subfigure}    
		\begin{subfigure}[c]{0.49\linewidth}
			\centering            
			\includegraphics[width=0.9\linewidth]{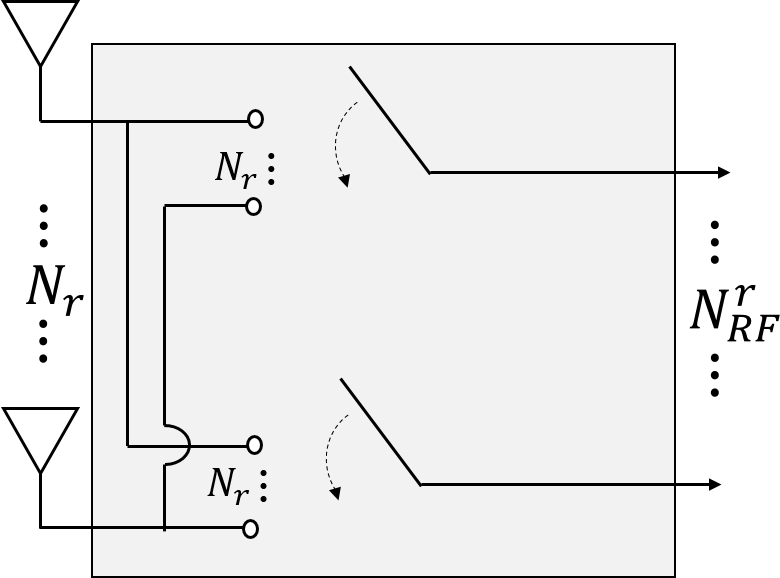}
			\subcaption{}\label{fig:S3}
			\vspace{0.5cm}    
		\end{subfigure}
		\begin{subfigure}[d]{0.49\linewidth}
			\centering            
			\includegraphics[width=0.9\linewidth]{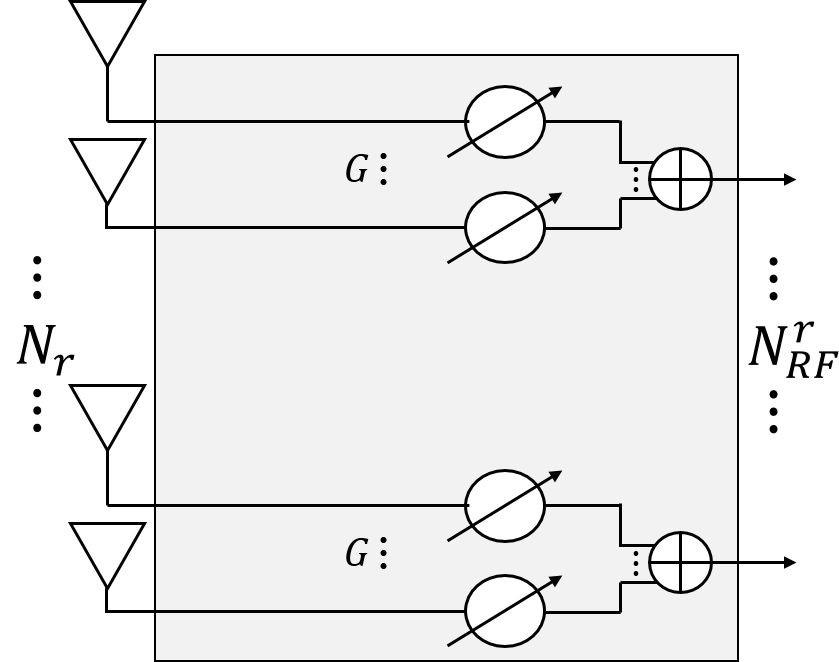}
			\subcaption{}\label{fig:S4}
			\vspace{0.6cm}    
		\end{subfigure}
		\begin{subfigure}[e]{0.49\linewidth}
			\centering            
			\includegraphics[width=0.9\linewidth]{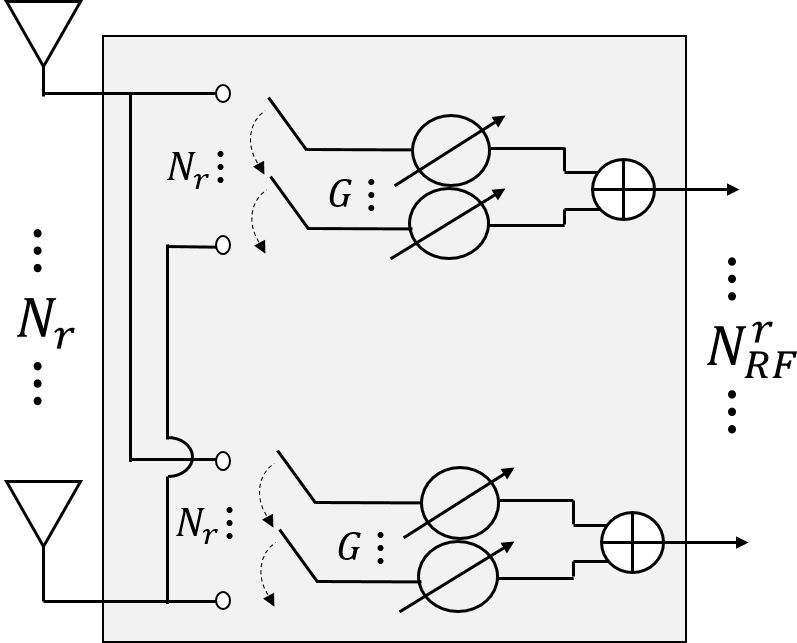}
			\subcaption{}\label{fig:S5}
		\end{subfigure}
		\caption{Combiner hardware schemes. (\ref{fig:S1}) S\ref{scheme:FullyFirst}: Fully connected phase shifters and switches network, (\ref{fig:S2}) S\ref{scheme:FullyPhaseOnly}: Fully connected phase shifters network, (\ref{fig:S3}) S\ref{scheme:Switching}: Switching network, (\ref{fig:S4}) S\ref{scheme:PhaseSubArrays}: Fixed partially connected phase shifters network with sub-arrays, and (\ref{fig:S5}) S\ref{scheme:FlexiLast}: Flexible partially connected phase shifters network with sub-arrays.}\label{fig:Schemes}
	\end{center}
	\vspace{-0.9cm}
\end{figure}

\section{MSE Minimization}\label{sec:DataEst}        
To optimally design the precoder and combiner we consider the MSE minimization problem in (\ref{eq:Opt0}).

Since there are no constraints on $\boldsymbol{W}_{BB}$ in (\ref{eq:Opt0}), it can be chosen as any $N_{RF}^r\times N_{RF}^r$ matrix, including the linear MMSE estimator of $\boldsymbol{s}$ from the measurements $\tilde{\boldsymbol{y}}=\boldsymbol{W}_{RF}^{*}\boldsymbol{y}$. This estimator depends on the matrices $\boldsymbol{F}_{RF},\boldsymbol{F}_{BB}$,$\boldsymbol{W}_{RF}$ and is given by
\begin{equation}
\begin{aligned}
\boldsymbol{W}_{BB,opt}^{*}=
\bar{\boldsymbol{H}}^{*}\boldsymbol{W}_{RF}\left[\boldsymbol{W}_{RF}{*}\left(\bar{\boldsymbol{H}}\bar{\boldsymbol{H}}^{*}+\boldsymbol{R}_{z}\right)\boldsymbol{W}_{RF}\right]^{-1},
\end{aligned}\label{eq:Wbb}
\end{equation}
with 
\begin{equation}
\bar{\boldsymbol{H}}=\sqrt{p_r}\boldsymbol{H}\boldsymbol{F}_{RF}\boldsymbol{F}_{BB}.\label{eq:H_bar}
\end{equation} 
The resulting estimate
\begin{equation}
\hat{\boldsymbol{s}}=\bar{\boldsymbol{H}}^{*}\boldsymbol{W}_{RF}\left[\boldsymbol{W}_{RF}{*}\left(\bar{\boldsymbol{H}}\bar{\boldsymbol{H}}^{*}+\boldsymbol{R}_{z}\right)\boldsymbol{W}_{RF}\right]^{-1}\boldsymbol{W}_{RF}^{*}\boldsymbol{y},
\end{equation}
and its MSE is equal to
\begin{equation}
\begin{aligned}
\mathbb{E}[\Vert&\boldsymbol{s}-\boldsymbol{\hat{s}}\Vert^2]=N_s-  \\  &\text{tr}\left(\bar{\boldsymbol{H}}^{*}\boldsymbol{W}_{RF}\left[\boldsymbol{W}_{RF}{*}\left(\bar{\boldsymbol{H}}\bar{\boldsymbol{H}}^{*}+\boldsymbol{R}_{z}\right)\boldsymbol{W}_{RF}\right]^{-1} \boldsymbol{W}_{RF}^{*}\bar{\boldsymbol{H}}\right).
\end{aligned}\label{eq:MSE}
\end{equation}

The digital combiner of (\ref{eq:Wbb}) is the optimal linear estimator in the MSE sense for any given precoders $\boldsymbol{F}_{RF},\boldsymbol{F}_{BB}$ and analog combiner $\boldsymbol{W}_{RF}$. Thus, our remaining goal is to design $\boldsymbol{W}_{RF},\boldsymbol{F}_{RF},\boldsymbol{F}_{BB}$ to minimize (\ref{eq:MSE}). This is equivalent to the following maximization problem:
\begin{equation}
\begin{aligned}
\underset{_{\boldsymbol{W}_{RF},\boldsymbol{F}_{RF},\boldsymbol{F}_{BB}}}{\max}&
f\left(\boldsymbol{W}_{RF},\boldsymbol{F}_{RF},\boldsymbol{F}_{BB}\right)&\\
\text{s.t.}:\;\;\;\;\;\;\;\;&
\boldsymbol{F}_{RF}\in\mathcal{F}^{N_{t}\times N_{RF}^{t}},\;\Vert\boldsymbol{F}_{RF}\boldsymbol{F}_{BB}\Vert_{F}^{2}\leq N_{s},& \\
&\boldsymbol{W}_{RF}\in\mathcal{W}^{N_{r}\times N_{RF}^{r}},&
\end{aligned}
\label{eq:Opt1}
\end{equation} 
with
\begin{align}
f\left(\boldsymbol{W}_{RF},\boldsymbol{F}_{RF},\boldsymbol{F}_{BB}\right)=&\\
\text{tr}\big(\boldsymbol{W}_{RF}^{*}\bar{\boldsymbol{H}}\bar{\boldsymbol{H}}^{*}\boldsymbol{W}_{RF}&\big[\boldsymbol{W}_{RF}{*}\left(\bar{\boldsymbol{H}}\bar{\boldsymbol{H}}^{*}+\boldsymbol{R}_{z}\right)\boldsymbol{W}_{RF}\big]^{-1} \big).\nonumber
\end{align}  

Similar to the case considered in \cite{ayach_spatially_2014}, the joint optimization problem (\ref{eq:Opt1}) is non-convex and has no known solution. Hence, as in previous works \cite{ayach_spatially_2014,yu_alternating_2016,yu_partially-connected_2017}, we simplify the problem by decoupling it. First, in Section \ref{sec:Precoder}, we optimize the precoders $\boldsymbol{F}_{RF},\boldsymbol{F}_{BB}$ given $\boldsymbol{W}_{RF}$. Then, in Section \ref{sec:Combiner}, we assume fixed $\boldsymbol{F}_{RF},\boldsymbol{F}_{BB}$ and optimize the combiner $\boldsymbol{W}_{RF}$. 

\section{Precoder design}\label{sec:Precoder}

We now consider the problem of designing the precoder $\boldsymbol{F}=\boldsymbol{F}_{RF}\boldsymbol{F}_{BB}$. 
For this purpose we assume a fixed analog combiner and rewrite (\ref{eq:Opt1}) as  
\begin{equation}
\begin{aligned}
\underset{_{\boldsymbol{F}_{RF},\boldsymbol{F}_{BB}}}{\max}&
\text{tr}\left(\tilde{\boldsymbol{H}}\boldsymbol{F}\boldsymbol{F}^{*}\tilde{\boldsymbol{H}}^{*}
\left[\tilde{\boldsymbol{H}}\boldsymbol{F}\boldsymbol{F}^{*}\tilde{\boldsymbol{H}}^{*}+\tilde{\boldsymbol{R}}\right]^{-1} \right)\\
\text{s.t.}:\;\;\;\;&
\boldsymbol{F}_{RF}\in\mathcal{F}^{N_{t}\times N_{RF}^{t}},\;\Vert\boldsymbol{F}_{RF}\boldsymbol{F}_{BB}\Vert_{F}^{2}\leq N_{s},
\\\end{aligned}
\label{eq:OptF}
\end{equation}
with $\tilde{\boldsymbol{H}}=\sqrt{p_r}\boldsymbol{W}_{RF}^{*}\boldsymbol{H}$ and $\tilde{\boldsymbol{R}}=\boldsymbol{W}_{RF}^{*}\boldsymbol{R}_{z}\boldsymbol{W}_{RF}$. If we relax the constraint $\boldsymbol{F}_{RF}\in\mathcal{F}^{N_{t}\times N_{RF}^{t}}$ to the fully-digital case $\boldsymbol{F}_{RF}\in\mathbb{C}^{N_t\times N_{RF}^t}$, then (\ref{eq:OptF}) has a closed form solution \cite{scaglione_optimal_2002}
\begin{equation}
\boldsymbol{F}_{opt} = \boldsymbol{V}\boldsymbol{\Phi}\boldsymbol{T}\label{eq:Fopt}
\end{equation}
where $\boldsymbol{V}$ contains the first $N_{RF}^t$ eigenvectors of the matrix $\tilde{\boldsymbol{H}}^{*}\tilde{\boldsymbol{R}}^{-1}\tilde{\boldsymbol{H}}$, $\boldsymbol{\Phi}$ is an $N_{RF}^t\times N_{RF}^t$ diagonal matrix with power allocation weights as in \cite{scaglione_optimal_2002}, and $\boldsymbol{T}$ denotes any $N_{RF}^t\times N_{RF}^t$ unitary matrix. It follows that there is a set of optimal solutions, one for every unitary $\boldsymbol{T}$. 

Previous works \cite{alkhateeb_hybrid_2013,ayach_spatially_2014,lee_hybrid_2015,mendez-rial_hybrid_2016,yu_alternating_2016,yu_partially-connected_2017} considered $\boldsymbol{T}=\boldsymbol{I}_{N_{RF}^t}$ and suggested to approximate $\boldsymbol{F}_{opt}$ with a decomposition $\boldsymbol{F}_{RF}\boldsymbol{F}_{BB}$, which yields the problem 

\begin{equation}
\begin{aligned}
\underset{_{\boldsymbol{F}_{RF},\boldsymbol{F}_{BB}}}{\min} & \Vert\boldsymbol{F}_{opt}-\boldsymbol{F}_{RF}\boldsymbol{F}_{BB}\Vert_{F}^{2} \\ 
\text{s.t.}:\;\;\;\; & \boldsymbol{F}_{RF}\in\mathcal{F}^{N_{t}\times N_{RF}^{t}},\;\Vert\boldsymbol{F}_{RF}\boldsymbol{F}_{BB}\Vert_{F}^{2}\leq N_{s}. 
\end{aligned}\label{eq:approxF}
\end{equation} 
This approach is motivated by the approximation in \cite{ayach_spatially_2014} which shows that minimizing (\ref{eq:approxF}) approximately maximizes the system's spectral efficiency. However, these solutions suffer from several disadvantages. 
In \cite{alkhateeb_hybrid_2013,ayach_spatially_2014,kim_mse-based_2015,lee_hybrid_2015,mendez-rial_hybrid_2016}, the suggested design for the analog precoder is based on a dictionary composed of candidate vectors. This constraint leads to a large performance gap from the fully-digital precoder, especially in the case where the number of RF chains is much less than the number of multipath components in the channel $\boldsymbol{H}$. 
In \cite{yu_alternating_2016}, the author suggests a manifold optimization algorithm termed MO-AltMin that achieves good performance, but has very high complexity and is suitable only for scheme S\ref{scheme:FullyPhaseOnly}. 
The solution in \cite{yu_partially-connected_2017} is based on the network S\ref{scheme:PhaseSubArrays} but with twice the amount of phase shifters per RF chains, and is irrelevant for other schemes. None of the above works take advantage of the flexibility in choosing $\boldsymbol{T}$, which we show below can boost the performance of existing solutions with only a small computational price.

Specifically, we exploit the flexibility in the choice of $\boldsymbol{T}$, and find the unitary matrix that results in the smallest approximation gap $\Vert\boldsymbol{V}\boldsymbol{\Phi}\boldsymbol{T}-\boldsymbol{F}_{RF}\boldsymbol{F}_{BB}\Vert_{F}^{2}$. To this end, we develop an alternating minimization framework inspired by \cite{Yu:2003:MSC:946247.946658}, that alternately optimizes over $\boldsymbol{F}_{RF},\boldsymbol{F}_{BB}$ and $\boldsymbol{T}$. In the simulations in Section \ref{sec:simulations}, we demonstrate that by adding the optimization over $\boldsymbol{T}$, we are able to improve the performance of state-of-the-art algorithms, with no substantial increase in complexity. 

\subsection{Alt-MaG - Alternating Minimization of Approximation Gap}

Plugging (\ref{eq:Fopt}) into (\ref{eq:approxF}) and adding $\boldsymbol{T}$ as an optimization variable, our problem becomes
\begin{equation}
\begin{aligned}
\underset{_{\boldsymbol{T},\boldsymbol{F}_{RF},\boldsymbol{F}_{BB}}}{\min} & \Vert\boldsymbol{V}\boldsymbol{\Phi}\boldsymbol{T}-\boldsymbol{F}_{RF}\boldsymbol{F}_{BB}\Vert_{F}^{2} \\ 
\text{s.t.}:\;\;\;\;\; & \boldsymbol{F}_{RF}\in\mathcal{F}^{N_{t}\times N_{RF}^{t}},\;\Vert\boldsymbol{F}_{RF}\boldsymbol{F}_{BB}\Vert_{F}^{2}\leq N_{s},\\
& \boldsymbol{T}\in\mathcal{U}^{N_{RF}^t}, 
\end{aligned}\label{eq:OptTF}
\end{equation} 
with $\mathcal{U}^{M}$ the set of all $M\times M$ unitary matrices.
To approximate the solution of (\ref{eq:OptTF}), we consider an alternating minimization approach. 
First, we fix $\boldsymbol{F}_{RF},\boldsymbol{F}_{BB}$ and solve 

\begin{equation}
\begin{aligned}
\underset{_{\boldsymbol{T}}}{\min\;} \;\;& \;\;\Vert\boldsymbol{V}\boldsymbol{\Phi}\boldsymbol{T}-\boldsymbol{F}_{RF}\boldsymbol{F}_{BB}\Vert_{F}^{2}&&& \\ 
\text{s.t.}:\;\;\; &\;\; \boldsymbol{T}\in\mathcal{U}^{N_{RF}^t}.&&&
\end{aligned}\label{eq:OptT}
\end{equation}
Next, using the resulting $\boldsymbol{T}$, we solve

\begin{equation}
\begin{aligned}
\underset{_{\boldsymbol{F}_{RF},\boldsymbol{F}_{BB}}}{\min\;} & \Vert\boldsymbol{V}\boldsymbol{\Phi}\boldsymbol{T}-\boldsymbol{F}_{RF}\boldsymbol{F}_{BB}\Vert_{F}^{2} \\ 
\text{s.t.}:\;\;\;\; & \boldsymbol{F}_{RF}\in\mathcal{F}^{N_{t}\times N_{RF}^{t}},\Vert\boldsymbol{F}_{RF}\boldsymbol{F}_{BB}\Vert_{F}^{2}\leq N_{s},
\end{aligned}\label{eq:OptFF}
\end{equation} 
and repeat iteratively. 

If both (\ref{eq:OptT}) and (\ref{eq:OptFF}) can be solved, then the objective value decreases in each step, and since it is bounded from below, the algorithm will converge to  a local optimum. However, while (\ref{eq:OptT}) has a (closed form) optimal solution, problem (\ref{eq:OptFF}) is a special case of (\ref{eq:approxF}), which has no known solution. To attain a sub optimal solution, any of the mentioned algorithms \cite{alkhateeb_hybrid_2013,ayach_spatially_2014,kim_mse-based_2015,lee_hybrid_2015,mendez-rial_hybrid_2016,yu_alternating_2016,yu_partially-connected_2017} can be used (provided it is compatible with the hardware choice). Each of these solutions will result in a different algorithm from the Alt-MaG family.

A closed form solution to problem (\ref{eq:OptT}) is given in the next theorem.
\begin{thm}\label{eq:Ttheorem}
	Given the svd decomposition $\boldsymbol{F}_{BB}^{*}\boldsymbol{F}_{RF}^{*}\boldsymbol{V}\boldsymbol{\Phi}=\tilde{\boldsymbol{U}}\boldsymbol{\Lambda}\tilde{\boldsymbol{V}}^{*}$, the optimal solution $\boldsymbol{T}_{opt}$ to (\ref{eq:OptT}) is given by
	\begin{equation}
	\boldsymbol{T}_{opt}=\tilde{\boldsymbol{V}}\tilde{\boldsymbol{U}}^{*}
	\end{equation}
\end{thm}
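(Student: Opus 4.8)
The plan is to recognize (\ref{eq:OptT}) as an instance of the unitary Procrustes problem and to solve it by reducing the Frobenius objective to a single trace that is maximized when $\boldsymbol{T}$ aligns the singular subspaces of $\boldsymbol{F}_{BB}^{*}\boldsymbol{F}_{RF}^{*}\boldsymbol{V}\boldsymbol{\Phi}$. Writing $\boldsymbol{A}=\boldsymbol{V}\boldsymbol{\Phi}$ and $\boldsymbol{B}=\boldsymbol{F}_{RF}\boldsymbol{F}_{BB}$ for brevity, I would first expand
\[
\Vert\boldsymbol{A}\boldsymbol{T}-\boldsymbol{B}\Vert_{F}^{2}=\text{tr}\left(\boldsymbol{T}^{*}\boldsymbol{A}^{*}\boldsymbol{A}\boldsymbol{T}\right)-2\Re\left\{\text{tr}\left(\boldsymbol{B}^{*}\boldsymbol{A}\boldsymbol{T}\right)\right\}+\text{tr}\left(\boldsymbol{B}^{*}\boldsymbol{B}\right).
\]
Here I have used that $\text{tr}(\boldsymbol{T}^{*}\boldsymbol{A}^{*}\boldsymbol{B})=\overline{\text{tr}(\boldsymbol{B}^{*}\boldsymbol{A}\boldsymbol{T})}$ to combine the two cross terms into a real part. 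Since $\boldsymbol{T}$ is unitary, cyclicity of the trace gives $\text{tr}(\boldsymbol{T}^{*}\boldsymbol{A}^{*}\boldsymbol{A}\boldsymbol{T})=\text{tr}(\boldsymbol{A}^{*}\boldsymbol{A}\boldsymbol{T}\boldsymbol{T}^{*})=\text{tr}(\boldsymbol{A}^{*}\boldsymbol{A})$, which is independent of $\boldsymbol{T}$; the term $\text{tr}(\boldsymbol{B}^{*}\boldsymbol{B})$ is likewise constant. Hence minimizing (\ref{eq:OptT}) is equivalent to maximizing $\Re\left\{\text{tr}\left(\boldsymbol{B}^{*}\boldsymbol{A}\boldsymbol{T}\right)\right\}$ over all unitary $\boldsymbol{T}$.

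Next I would substitute the given SVD $\boldsymbol{B}^{*}\boldsymbol{A}=\boldsymbol{F}_{BB}^{*}\boldsymbol{F}_{RF}^{*}\boldsymbol{V}\boldsymbol{\Phi}=\tilde{\boldsymbol{U}}\boldsymbol{\Lambda}\tilde{\boldsymbol{V}}^{*}$ and apply cyclicity once more to write
\[
\text{tr}\left(\boldsymbol{B}^{*}\boldsymbol{A}\boldsymbol{T}\right)=\text{tr}\left(\tilde{\boldsymbol{U}}\boldsymbol{\Lambda}\tilde{\boldsymbol{V}}^{*}\boldsymbol{T}\right)=\text{tr}\left(\boldsymbol{\Lambda}\boldsymbol{Z}\right),\qquad \boldsymbol{Z}:=\tilde{\boldsymbol{V}}^{*}\boldsymbol{T}\tilde{\boldsymbol{U}}.
\]
As a product of unitary matrices, $\boldsymbol{Z}$ is itself unitary, and the map $\boldsymbol{T}\mapsto\boldsymbol{Z}$ is a bijection of $\mathcal{U}^{N_{RF}^{t}}$ onto itself, so the problem becomes maximizing $\Re\{\text{tr}(\boldsymbol{\Lambda}\boldsymbol{Z})\}$ over unitary $\boldsymbol{Z}$. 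Since $\boldsymbol{\Lambda}=\text{diag}(\lambda_{1},\dots,\lambda_{N_{RF}^{t}})$ has nonnegative diagonal entries, this objective equals $\sum_{i}\lambda_{i}\,\Re\{[\boldsymbol{Z}]_{ii}\}$.

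The key estimate is that each diagonal entry of a unitary matrix satisfies $\left|[\boldsymbol{Z}]_{ii}\right|\le 1$, because every column of $\boldsymbol{Z}$ has unit $\ell_{2}$ norm; consequently $\Re\{[\boldsymbol{Z}]_{ii}\}\le 1$ and $\Re\{\text{tr}(\boldsymbol{\Lambda}\boldsymbol{Z})\}\le\sum_{i}\lambda_{i}$. This bound is attained by $[\boldsymbol{Z}]_{ii}=1$ for all $i$, which (again by the unit-norm columns) forces $\boldsymbol{Z}=\boldsymbol{I}$, i.e. $\boldsymbol{T}=\tilde{\boldsymbol{V}}\tilde{\boldsymbol{U}}^{*}$; this candidate is feasible since $\tilde{\boldsymbol{V}}\tilde{\boldsymbol{U}}^{*}$ is unitary. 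I expect the only delicate point to be the equality analysis: one must confirm that the diagonal bound can be saturated simultaneously in every coordinate (it can, via $\boldsymbol{Z}=\boldsymbol{I}$), and note that if some $\lambda_{i}$ vanish the optimizer may fail to be unique, but $\boldsymbol{T}_{opt}=\tilde{\boldsymbol{V}}\tilde{\boldsymbol{U}}^{*}$ always achieves the maximum and hence solves (\ref{eq:OptT}).
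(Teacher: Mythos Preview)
Your proof is correct and follows the same overall reduction as the paper: expand the Frobenius norm, use unitarity of $\boldsymbol{T}$ to discard the constant terms, and reduce to maximizing $\Re\{\text{tr}(\boldsymbol{F}_{BB}^{*}\boldsymbol{F}_{RF}^{*}\boldsymbol{V}\boldsymbol{\Phi}\,\boldsymbol{T})\}$ over unitary $\boldsymbol{T}$, then plug in the SVD and show the maximum is $\text{tr}(\boldsymbol{\Lambda})$, attained at $\boldsymbol{T}=\tilde{\boldsymbol{V}}\tilde{\boldsymbol{U}}^{*}$.

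The only genuine difference is in how the trace bound is obtained. The paper writes $\boldsymbol{\Lambda}=\boldsymbol{\Omega}^{2}$, factors the trace as an inner product of $\tilde{\boldsymbol{U}}\boldsymbol{\Omega}$ and $\boldsymbol{T}^{*}\tilde{\boldsymbol{V}}\boldsymbol{\Omega}$, and applies the Cauchy--Schwarz inequality to get $|\text{tr}(\cdot)|\le\text{tr}(\boldsymbol{\Omega}^{2})$. You instead perform the unitary change of variables $\boldsymbol{Z}=\tilde{\boldsymbol{V}}^{*}\boldsymbol{T}\tilde{\boldsymbol{U}}$ and bound $\sum_{i}\lambda_{i}\Re\{[\boldsymbol{Z}]_{ii}\}$ directly via $|[\boldsymbol{Z}]_{ii}|\le 1$. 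Your route is a bit more elementary (no Cauchy--Schwarz, just unit-norm columns) and makes the equality analysis cleaner: $\boldsymbol{Z}=\boldsymbol{I}$ visibly saturates every term, and the possible non-uniqueness when some $\lambda_{i}=0$ drops out immediately. The paper's Cauchy--Schwarz argument reaches the same bound but leaves the equality case implicit. Both are standard Procrustes arguments and equally valid.
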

\begin{proof}
	First note that $\Vert\boldsymbol{V}\boldsymbol{\Phi}\boldsymbol{T}-\boldsymbol{F}_{RF}\boldsymbol{F}_{BB}\Vert_{F}^{2}=\Vert\boldsymbol{V}\boldsymbol{\Phi}\Vert_{F}^{2}+\Vert\boldsymbol{F}_{RF}\boldsymbol{F}_{BB}\Vert_{F}^{2}-2 \Re(\text{tr}\left(\boldsymbol{F}_{BB}^{*}\boldsymbol{F}_{RF}^{*}\boldsymbol{V}\boldsymbol{\Phi}\boldsymbol{T}\right))$. Thus, it is sufficient to maximize $\Re(\text{tr}\left(\boldsymbol{F}_{BB}^{*}\boldsymbol{F}_{RF}^{*}\boldsymbol{V}\boldsymbol{\Phi}\boldsymbol{T}\right))$.
	Denote $\boldsymbol{\Lambda}=\boldsymbol{\Omega}^{2}$. Then,
	\begin{align*}    
	\Re(\text{tr}(&\boldsymbol{F}_{BB}^{*}\boldsymbol{F}_{RF}^{*}\boldsymbol{V}\boldsymbol{\Phi}\boldsymbol{T})) \\
	&\leq
	\left|\text{tr}(\boldsymbol{F}_{BB}^{*}\boldsymbol{F}_{RF}^{*}\boldsymbol{V}\boldsymbol{\Phi}\boldsymbol{T})\right|\\
	&=\left|\text{tr}((\tilde{\boldsymbol{U}}\boldsymbol{\Omega}\right)\left(\boldsymbol{T}^{*}\tilde{\boldsymbol{V}}\boldsymbol{\Omega})^{*})\right|
	\\
	&\underset{\left(*\right)}{\leq}\sqrt{\text{tr}(\tilde{\boldsymbol{U}}\boldsymbol{\Omega}^2\tilde{\boldsymbol{U}}^{*})\text{tr}(\boldsymbol{T}^{*}\tilde{\boldsymbol{V}}\boldsymbol{\Omega}^2\tilde{\boldsymbol{V}}^{*}\boldsymbol{T})}\\
	&=\text{tr}\left(\boldsymbol{\Omega}^2\right),
	\end{align*}
	where $\left(*\right)$ stems from the Cauchy-Schwarz inequality. For $\boldsymbol{T}=\tilde{\boldsymbol{V}}\tilde{\boldsymbol{U}}^{*}$, the upper bound is achieved with equality.
\end{proof}

The resulting family of Alt-MaG algorithms is outlined in Algorithm \ref{algo:IterQuan}. 

\begin{algorithm}[h]
	\caption{Alt-MaG - alternating minimization of approximation gap}
	\text{\textbf{Input:} fully-digital precoder $\boldsymbol{V}\boldsymbol{\Phi}$, threshold $t$}    
	\text{\textbf{Output:} digital precoder $\boldsymbol{F}_{BB}$, analog precoder $\boldsymbol{F}_{RF}$}
	\text{\textbf{Initialize} $\boldsymbol{T}=\boldsymbol{I}_{N_{RF}^t},\boldsymbol{F}_{RF}=\boldsymbol{0},\boldsymbol{F}_{BB}=\boldsymbol{0}$}
	\text{\textbf{While} $\Vert\boldsymbol{V}\boldsymbol{\Phi}\boldsymbol{T}-\boldsymbol{F}_{RF}\boldsymbol{F}_{BB}\Vert_{F}^{2}\geq t$ do:}
	\begin{enumerate}
		\item \text{\textbf{Calculate} $\left(\boldsymbol{F}_{RF},\boldsymbol{F}_{BB}\right)$ from (\ref{eq:OptFF}) using one of} \\ \text{the methods in Section \ref{sec:Precoder}}
		\item \text{\textbf{Calculate} the svd decomposition $\boldsymbol{F}_{BB}^{*}\boldsymbol{F}_{RF}^{*}\boldsymbol{V}\boldsymbol{\Phi}=\tilde{\boldsymbol{U}}\boldsymbol{\Lambda}\tilde{\boldsymbol{V}}^{*}$}
		\item \text{\textbf{Set} $\boldsymbol{T}=\tilde{\boldsymbol{V}}\tilde{\boldsymbol{U}}^{*}$}
	\end{enumerate}\label{algo:IterQuan}
\end{algorithm} 

Next we suggest a possible simplification to (\ref{eq:OptFF}) that can be used with any of the schemes S\ref{scheme:FullyFirst}-S\ref{scheme:FlexiLast}, and results in a simple closed form solution for $\boldsymbol{F}_{RF},\boldsymbol{F}_{BB}$. 

\subsection{MaGiQ - Minimal Gap Iterative Quantization}\label{sec:Methods}

The complexity of Algorithm \ref{algo:IterQuan} is determined by the method used in step 1 and could be accordingly high. One approach that yields a low complexity algorithm, is approximating $\boldsymbol{V}\boldsymbol{\Phi}\boldsymbol{T}$ by an analog precoder only, setting the digital precoder to $\boldsymbol{F}_{BB}=\boldsymbol{I}_{N_{RF}^{t}}$. 
That is, in every iteration, solving

\begin{equation}
\begin{aligned}
\underset{_{\boldsymbol{F}_{RF}}}{\min\;\;\;} & \;\Vert\boldsymbol{V}\boldsymbol{\Phi}\boldsymbol{T}-\boldsymbol{F}_{RF}\Vert_{F}^{2} \\ 
\text{s.t.}:\;\;\; & \;\boldsymbol{F}_{RF}\in\mathcal{F}^{N_{t}\times N_{RF}^{t}}. \\
\end{aligned}\label{eq:OptFrf}
\end{equation} 
The benefit of this approach is the existence of a simple, optimal, closed form solution to (\ref{eq:OptFrf}), given by
\begin{equation}
\boldsymbol{F}_{RF}=\boldsymbol{P}_\mathcal{F}\left(\boldsymbol{V}\boldsymbol{\Phi}\boldsymbol{T}\right),
\end{equation}
with $\boldsymbol{P}_\mathcal{F}\left(\cdot\right)$ defined in Section \ref{sec:SignalModel}.  

Once $\boldsymbol{F}_{RF}$ and $\boldsymbol{T}$ converge, $\boldsymbol{F}_{BB}$ is calculated as the solution to
\begin{equation}
\begin{aligned}
\underset{_{\boldsymbol{F}_{BB}}}{\min\;\;\;} & \;\Vert\boldsymbol{V}\boldsymbol{\Phi}\boldsymbol{T}-\boldsymbol{F}_{RF}\boldsymbol{F}_{BB}\Vert_{F}^{2} \\ 
\text{s.t.}:\;\;\; & \;\Vert\boldsymbol{F}_{RF}\boldsymbol{F}_{BB}\Vert_{F}^{2}\leq N_{S}. \\
\end{aligned}\label{eq:OptFbb}
\end{equation}
Note that $\boldsymbol{F}_{RF}$ depends on $\boldsymbol{V}\boldsymbol{\Phi}\boldsymbol{T}$ which are a function of the random channel and interference. With high probability, it has full column rank. In this case, (\ref{eq:OptFbb}) has a closed form solution
\begin{equation}
\boldsymbol{F}_{BB}=\left[\boldsymbol{F}_{RF}^{*}\boldsymbol{F}_{RF}\right]^{-1}\boldsymbol{F}_{RF}^{*}\boldsymbol{V}\boldsymbol{\Phi}\boldsymbol{T}.\label{eq:Fbb}
\end{equation}
Note that since the optimal solution $\boldsymbol{V}\boldsymbol{\Phi}\boldsymbol{T}$ obeys the transmit power constraint $\Vert\boldsymbol{V}\boldsymbol{\Phi}\boldsymbol{T}\Vert_{F}^{2}\leq N_S$, it follows that
\begin{flalign}
\Vert\boldsymbol{F}_{RF}\boldsymbol{F}_{BB}\Vert_{F}^{2}&=\Vert\boldsymbol{F}_{RF}\left[\boldsymbol{F}_{RF}^{*}\boldsymbol{F}_{RF}\right]^{-1}\boldsymbol{F}_{RF}^{*}\boldsymbol{V}\boldsymbol{\Phi}\boldsymbol{T}\Vert_{F}^{2}&\nonumber\\
&=\Vert\boldsymbol{P}_{\boldsymbol{F}_{RF}}\boldsymbol{V}\boldsymbol{\Phi}\boldsymbol{T}\Vert_{F}^{2}\leq \Vert\boldsymbol{V}\boldsymbol{\Phi}\boldsymbol{T}\Vert_{F}^{2}\leq N_{S}.&\nonumber
\end{flalign}
Therefore, (\ref{eq:Fbb}) complies with the transmit power constraint.

MaGiQ is summarized in Algorithm~\ref{algo:SimpleQuan}.
Since the non-negative objective function $\Vert\boldsymbol{V}\boldsymbol{\Phi}\boldsymbol{T}-\boldsymbol{F}_{RF}\Vert_{F}^{2}$ does not increase at each iteration of Algorithm \ref{algo:SimpleQuan}, the algorithm converges.
In the simulations in Section~\ref{sec:simulations}, we demonstrate that MaGiQ has lower MSE than other low-complexity methods in the case of very few RF chains.
\begin{algorithm}[h] 
	\caption{MaGiQ -  Minimal Gap Iterative Quantization}
	\text{\textbf{Input:} fully-digital precoder $\boldsymbol{V}\boldsymbol{\Phi}$, threshold $t$}    
	\text{\textbf{Output:} digital precoder $\boldsymbol{F}_{BB}$, analog precoder $\boldsymbol{F}_{RF}$}
	\text{\textbf{Initialize} $\boldsymbol{T}=\boldsymbol{I}_{N_{RF}^t},\boldsymbol{F}_{RF}=\boldsymbol{0},\boldsymbol{F}_{BB}=\boldsymbol{I}_{N_{RF}^t}$}
	\text{\textbf{While} $\Vert\boldsymbol{V}\boldsymbol{\Phi}\boldsymbol{T}-\boldsymbol{F}_{RF}\Vert_{F}^{2}\geq t$ do:}
	\begin{enumerate}
		\item \textbf{Calculate }$\boldsymbol{F}_{RF}=\boldsymbol{P}_{\mathcal{F}}\left(\boldsymbol{V}\boldsymbol{\Phi}\boldsymbol{T}\right)$\label{eq:Fstep}
		\item \text{\textbf{Calculate} the svd decomposition $\boldsymbol{F}_{RF}^{*}\boldsymbol{V}\boldsymbol{\Phi}=\tilde{\boldsymbol{U}}\boldsymbol{\Lambda}\tilde{\boldsymbol{V}}^{*}$}
		\item \text{\textbf{Set} $\boldsymbol{T}=\tilde{\boldsymbol{V}}\tilde{\boldsymbol{U}}^{*}$}\label{eq:Tstep}
	\end{enumerate}
	\textbf{Calculate } $\boldsymbol{F}_{BB}=\left[\boldsymbol{F}_{RF}^{*}\boldsymbol{F}_{RF}\right]^{-1}\boldsymbol{F}_{RF}^{*}\boldsymbol{V}\boldsymbol{\Phi}\boldsymbol{T}$
	\label{algo:SimpleQuan}
\end{algorithm}

A version of Algorithm~\ref{algo:SimpleQuan}, PE-AltMin, has been previously suggested in \cite{yu_alternating_2016} for the scheme S\ref{scheme:FullyPhaseOnly}, as an optimization of an upper bound on (\ref{eq:OptFF}). There, the authors assumed that the optimal $\boldsymbol{F}_{BB}$ is an orthogonal matrix of the form $\boldsymbol{F}_{BB}=\alpha\boldsymbol{T}$, and considered the problem
\begin{equation}
\begin{aligned}
\underset{_{\boldsymbol{T},\alpha,\boldsymbol{F}_{RF}}}{\min} & \Vert\boldsymbol{V}\boldsymbol{\Phi}-\alpha\boldsymbol{F}_{RF}\boldsymbol{T}\Vert_{F}^{2} \\ 
\text{s.t.}:\;\;\; & \boldsymbol{F}_{RF}\in\mathcal{F}^{N_{t}\times N_{RF}^{t}},\;\Vert\alpha\boldsymbol{F}_{RF}\boldsymbol{T}\Vert_{F}^{2}=N_{s},\\
& \boldsymbol{T}\in\mathcal{U}^{N_{RF}^t}. 
\end{aligned}\label{eq:OptAltMin}
\end{equation} 
Setting $\alpha=\frac{\sqrt{N_{S}}}{\Vert\boldsymbol{F}_{RF}\boldsymbol{T}\Vert_{F}}$, and using some mathematical manipulations, they bounded the objective in (\ref{eq:OptAltMin}) from above with the expression $\Vert\boldsymbol{V}\boldsymbol{\Phi}\boldsymbol{T}^*-\boldsymbol{F}_{RF}\Vert_{F}^{2}$. After optimizing the upper bound over $\boldsymbol{T}$ and $\boldsymbol{F}_{RF}$ alternately, they set $\boldsymbol{F}_{BB}$ as $\boldsymbol{F}_{BB}=\frac{\sqrt{N_{S}}}{\Vert\boldsymbol{F}_{RF}\boldsymbol{T}\Vert_{F}}\boldsymbol{T}$, which is sub optimal, as demonstrated in Section \ref{sec:simulations}, where we compare both methods. 
The main difference between this approach and MaGiQ, is that while we exploit the degree of freedom in the fully-digital precoder, PE-AltMin restricts the set of possible digital precoders to orthogonal matrices alone, which degrades performance. 
In addition, Algorithm \ref{algo:SimpleQuan} is relevant to any hardware structure.

Alt-MaG and MaGiQ where developed for MSE minimization, but can be generalized to other performance criteria, by replacing the optimal unconstrained solution expressions. For example, when considering the system's spectral efficiency as in \cite{ayach_spatially_2014,yu_alternating_2016}, the optimal solution is (\ref{eq:Fopt}) with $\boldsymbol{\Phi}$ now being a diagonal matrix with the power allocation weights given by water filling \cite{scaglione_optimal_2002}.

\subsection{Optimality of MaGiQ}\label{sec:Optimality}
One advantage of MaGiQ over other existing techniques is that in some special cases it converges to the optimal fully-digital solution, as we show in the following proposition.
\begin{prop}\label{prop:MaGiQ}
	Define $\boldsymbol{F}_{opt}$ as in (\ref{eq:Fopt}), and assume that $\boldsymbol{\Phi}$ has positive values on its diagonal. Denote  $\boldsymbol{\tilde{F}}$ the hybrid precoder produced by MAGiQ for the input $\boldsymbol{F}_{opt}$. If $\boldsymbol{V}\boldsymbol{\Psi}\in\mathcal{F}^{N_{t}\times N_{RF}^{t}}$, for some diagonal, positive definite matrix $\boldsymbol{\Psi}$, then 
	\begin{equation*}
	\boldsymbol{\tilde{F}}=\boldsymbol{V}\boldsymbol{\Phi}.
	\end{equation*} 
\end{prop}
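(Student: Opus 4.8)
The plan is to trace Algorithm~\ref{algo:SimpleQuan} on the input $\boldsymbol{F}_{opt}=\boldsymbol{V}\boldsymbol{\Phi}$ (the representative of \eqref{eq:Fopt} with $\boldsymbol{T}=\boldsymbol{I}$) and show that, under the hypothesis, the iteration reaches a fixed point after a single pass, from which the claimed identity follows by direct computation. MaGiQ is initialized with its internal variable $\boldsymbol{T}=\boldsymbol{I}_{N_{RF}^t}$, so in the first pass step~\ref{eq:Fstep} sets $\boldsymbol{F}_{RF}=\boldsymbol{P}_{\mathcal{F}}(\boldsymbol{V}\boldsymbol{\Phi}\boldsymbol{T})=\boldsymbol{P}_{\mathcal{F}}(\boldsymbol{V}\boldsymbol{\Phi})$. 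I will argue that $\boldsymbol{P}_{\mathcal{F}}(\boldsymbol{V}\boldsymbol{\Phi})=\boldsymbol{V}\boldsymbol{\Psi}$, that the subsequent svd step returns $\boldsymbol{T}=\boldsymbol{I}$ so that the iterates no longer change, and finally that the post-loop formula \eqref{eq:Fbb} yields $\boldsymbol{F}_{BB}=\boldsymbol{\Psi}^{-1}\boldsymbol{\Phi}$, whence $\boldsymbol{\tilde F}=\boldsymbol{F}_{RF}\boldsymbol{F}_{BB}=\boldsymbol{V}\boldsymbol{\Psi}\boldsymbol{\Psi}^{-1}\boldsymbol{\Phi}=\boldsymbol{V}\boldsymbol{\Phi}$.

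The central step is the projection identity $\boldsymbol{P}_{\mathcal{F}}(\boldsymbol{V}\boldsymbol{\Phi})=\boldsymbol{V}\boldsymbol{\Psi}$. The key observation is that both $\boldsymbol{V}\boldsymbol{\Phi}$ and $\boldsymbol{V}\boldsymbol{\Psi}$ are obtained from $\boldsymbol{V}$ by right-multiplication with a \emph{positive} diagonal matrix, so in each column they share the same entrywise phases and the same ordering of entry magnitudes, differing only by a positive scalar per column. Every projection $\boldsymbol{P}_{\mathcal{F}}$ in Section~\ref{sec:SignalModel} acts column-by-column and depends on a column only through these two features: the entry phases (as in \eqref{eq:Qs1}) and the relative magnitudes used to select a support. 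Consequently $\boldsymbol{P}_{\mathcal{F}}$ is invariant to positive diagonal column scaling, so $\boldsymbol{P}_{\mathcal{F}}(\boldsymbol{V}\boldsymbol{\Phi})=\boldsymbol{P}_{\mathcal{F}}(\boldsymbol{V}\boldsymbol{\Psi})$. Since $\boldsymbol{V}\boldsymbol{\Psi}\in\mathcal{F}^{N_t\times N_{RF}^t}$ by hypothesis, it is a fixed point of the projection, i.e.\ $\boldsymbol{P}_{\mathcal{F}}(\boldsymbol{V}\boldsymbol{\Psi})=\boldsymbol{V}\boldsymbol{\Psi}$, which gives $\boldsymbol{F}_{RF}=\boldsymbol{V}\boldsymbol{\Psi}$.

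With $\boldsymbol{F}_{RF}=\boldsymbol{V}\boldsymbol{\Psi}$ in hand, the remainder is algebra using $\boldsymbol{V}^{*}\boldsymbol{V}=\boldsymbol{I}$ and the fact that $\boldsymbol{\Psi},\boldsymbol{\Phi}$ are real diagonal with positive entries. Step~2 forms $\boldsymbol{F}_{RF}^{*}\boldsymbol{V}\boldsymbol{\Phi}=\boldsymbol{\Psi}\boldsymbol{V}^{*}\boldsymbol{V}\boldsymbol{\Phi}=\boldsymbol{\Psi}\boldsymbol{\Phi}$, which is itself diagonal with positive entries; hence its svd is $\boldsymbol{\tilde U}=\boldsymbol{\tilde V}=\boldsymbol{I}$ with $\boldsymbol{\Lambda}=\boldsymbol{\Psi}\boldsymbol{\Phi}$, and step~\ref{eq:Tstep} returns $\boldsymbol{T}=\boldsymbol{\tilde V}\boldsymbol{\tilde U}^{*}=\boldsymbol{I}$. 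Thus $\boldsymbol{T}$ is unchanged and $(\boldsymbol{F}_{RF},\boldsymbol{T})=(\boldsymbol{V}\boldsymbol{\Psi},\boldsymbol{I})$ is a fixed point, so the iterates have converged. Since $\boldsymbol{F}_{RF}=\boldsymbol{V}\boldsymbol{\Psi}$ has full column rank ($\boldsymbol{V}$ has orthonormal columns and $\boldsymbol{\Psi}$ is diagonal with positive entries), formula \eqref{eq:Fbb} applies and gives $\boldsymbol{F}_{BB}=\left[\boldsymbol{\Psi}^{2}\right]^{-1}\boldsymbol{\Psi}\boldsymbol{\Phi}=\boldsymbol{\Psi}^{-1}\boldsymbol{\Phi}$, completing the computation of $\boldsymbol{\tilde F}$ above.

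I expect the main obstacle to be making the projection identity rigorous uniformly across the schemes S\ref{scheme:FullyFirst}--S\ref{scheme:FlexiLast}, since the scale-invariance argument is cleanest for the pure phase-shifter projections \eqref{eq:Qs1} (schemes S\ref{scheme:FullyPhaseOnly}, S\ref{scheme:PhaseSubArrays}, S\ref{scheme:FlexiLast}), whereas the switch-based projection \eqref{eq:Qs2} uses an absolute magnitude threshold that is \emph{not} scale invariant; there one must invoke the hypothesis $\boldsymbol{V}\boldsymbol{\Psi}\in\mathcal{F}$ directly to control which entries survive the threshold. A secondary point is interpreting ``convergence'': one should note that $(\boldsymbol{V}\boldsymbol{\Psi},\boldsymbol{I})$ is a genuine fixed point of steps~\ref{eq:Fstep}--\ref{eq:Tstep}, so the loop produces no further change and the post-loop assignment of $\boldsymbol{F}_{BB}$ is well defined, independently of the stopping threshold $t$.
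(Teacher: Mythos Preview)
Your proposal is correct and follows essentially the same route as the paper's own proof: trace one pass of Algorithm~\ref{algo:SimpleQuan} starting from $\boldsymbol{T}=\boldsymbol{I}$, identify $\boldsymbol{F}_{RF}=\boldsymbol{P}_{\mathcal F}(\boldsymbol{V}\boldsymbol{\Phi})=\boldsymbol{V}\boldsymbol{\Psi}$, observe that $\boldsymbol{F}_{RF}^{*}\boldsymbol{V}\boldsymbol{\Phi}=\boldsymbol{\Psi}\boldsymbol{\Phi}$ is diagonal so the svd step returns $\boldsymbol{T}=\boldsymbol{I}$, and finish with $\boldsymbol{F}_{BB}=\boldsymbol{\Psi}^{-1}\boldsymbol{\Phi}$. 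The paper carries this out only for scheme S\ref{scheme:FullyPhaseOnly} and asserts that the other schemes admit analogous arguments; your scale-invariance observation for $\boldsymbol{P}_{\mathcal F}$ makes the projection step uniform across S\ref{scheme:FullyPhaseOnly}, S\ref{scheme:Switching}, S\ref{scheme:PhaseSubArrays}, S\ref{scheme:FlexiLast}, and you correctly flag that the thresholded projection \eqref{eq:Qs2} of S\ref{scheme:FullyFirst} is not scale invariant and needs the hypothesis applied directly.
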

Proposition~\ref{prop:MaGiQ} implies that if each column of $\boldsymbol{V}$ lies in the feasible set $\mathcal{F}^{N_{t}\times 1}$ up to some scaling, then MaGiQ will produce a globally optimal solution. This property is not assured in other algorithms, as demonstrated in the simulations in Section~\ref{sec:simulations}.
\begin{proof}
	For simplicity, we consider here the fully-connected phase shifter network S\ref{scheme:FullyPhaseOnly}. However, Proposition~\ref{prop:MaGiQ} holds for any hardware scheme with the same condition $\boldsymbol{V}\boldsymbol{\Psi}\in\mathcal{F}^{N_{t}\times N_{RF}^{t}}$, and similar proofs can be constructed accordingly.
	
	For S\ref{scheme:FullyPhaseOnly}, $\boldsymbol{V}\boldsymbol{\Psi}\in\mathcal{F}^{N_{t}\times N_{RF}^{t}}$ implies that $\boldsymbol{V}$ has columns with constant modulus.
	In the first iteration of MaGiQ, $\boldsymbol{T}$ is initialized as $\boldsymbol{T}=\boldsymbol{I}_{N_{RF}^t}$. Since $\boldsymbol{\Phi}$ is a diagonal matrix with positive diagonal values, step~1 of the algorithm leads to 
	\begin{equation*}
	\boldsymbol{F}_{RF}=\boldsymbol{P}_{\mathcal{F}}\left(\boldsymbol{V}\boldsymbol{\Phi}\right)=\boldsymbol{V}\boldsymbol{\Psi}.
	\end{equation*}
	It follows that $\boldsymbol{F}_{RF}^{*}\boldsymbol{V}\boldsymbol{\Phi}=\boldsymbol{\Psi}\boldsymbol{\Phi}$, which is diagonal. 
	Thus, step~3 yields 
	\begin{equation*}
	\boldsymbol{T}=\boldsymbol{I}_{N_{RF}^t}ץ
	\end{equation*}    
	and the algorithm converges after a single iteration. 
	For $\boldsymbol{F}_{BB}$ we get
	\begin{equation*}
	\boldsymbol{F}_{BB}=\left[\boldsymbol{F}_{RF}^{*}\boldsymbol{F}_{RF}\right]^{-1}\boldsymbol{F}_{RF}^{*}\boldsymbol{V}\boldsymbol{\Phi}=\boldsymbol{\Psi}^{-1}\boldsymbol{\Phi},
	\end{equation*}
	and the received hybrid precoder is $\boldsymbol{\tilde{F}}=\boldsymbol{V}\boldsymbol{\Phi}$.
\end{proof}

Note that the diagonal matrix $\boldsymbol{\Phi}$ represents non-negative weights \cite{scaglione_optimal_2002}. If one of the weights is equal to zero, then the corresponding column in the optimal fully-digital precoder is all zeros. In that case, we can drop the relevant column, and use the above proposition with the remaining fully-digital precoder. 

One example of a scenario in which the conditions of Proposition \ref{prop:MaGiQ} hold, is a fully-connected phase shifter precoder (for simplicity we assume fully-digital combiner) with $\boldsymbol{R}_{z}=\sigma_{z}^2\boldsymbol{I}_{N_r}$, and circulant channel matrix, that is 
\begin{equation}
\boldsymbol{H}=\boldsymbol{A}_{r}\boldsymbol{\Lambda}\boldsymbol{A}_{t}^{*},\label{eq:VirtualModel}
\end{equation}
where $\boldsymbol{A}_{r},\boldsymbol{A}_{t}$ are DFT matrices and $\boldsymbol{\Lambda}$ is a diagonal gain matrix. 
In this case, the singular vectors of $\boldsymbol{H}^{*}\boldsymbol{R}_{z}^{-1}\boldsymbol{H}$ are columns of the DFT matrix which are unimodular, and thus lie in $\mathcal{F}^{N_{t}\times 1}$ for S\ref{scheme:FullyPhaseOnly}.  Therefore, from Proposition \ref{prop:MaGiQ}, MaGiQ produces the optimal solution (\ref{eq:Fopt}).

A common special case of (\ref{eq:VirtualModel}), is the well-known narrow-band mmWave clustered channel \cite{sayeed_deconstructing_2002,smulders_characterisation_1997,xu_spatial_2002}, with a single ray at each cluster and asymptotic number of antennas. 
This model consists of $N_{cl}$ clusters with $N_{ray}$ propagating rays, each ray associated with transmit and receive directions, and complex gain. The channel matrix can be written as 
\begin{equation}
\boldsymbol{H}=\sqrt{\frac{N_{t}N_{r}}{N_{cl}N_{ray}}}\sum_{i=1}^{N_{cl}}\sum_{l=1}^{N_{ray}}\alpha_{il}\boldsymbol{a}_{r}\left(\phi_{il}^{r},\theta_{il}^{r}\right)\boldsymbol{a}_{t}\left(\phi_{il}^{t},\theta_{il}^{t}\right)^{*},\label{eq:mmWaveChannel}
\end{equation}
with $\alpha_{il}$ the complex gain of the $l$th ray in the $i$th cluster, and $\left(\phi_{il}^{t},\theta_{il}^{t}\right)$,  $\left(\phi_{il}^{r},\theta_{il}^{r}\right)$ the azimuth and elevation angles of departure and arrival for the $l$th ray in the $i$th cluster respectively. 
The vectors $\boldsymbol{a}_{t}\left(\phi_{il}^{t},\theta_{il}^{t}\right)$ and $\boldsymbol{a}_{r}\left(\phi_{il}^{r}\theta_{il}^{r}\right)$ represent the transmit and receive array responses, and depend on the array geometry.

For $N_{ray}=1$, (\ref{eq:mmWaveChannel}) can be written as (\ref{eq:VirtualModel}) with $\boldsymbol{\Lambda}$ a $N_{cl}\times N_{cl}$ diagonal matrix with the elements $\sqrt{\frac{N_{t}N_{r}}{N_{cl}N_{ray}}}\alpha_{il}$ on its diagonal, and $\boldsymbol{A}_{r},\boldsymbol{A}_{t}$ being steering matrices composed of the vectors $\boldsymbol{a}_{r}\left(\phi_{il}^{r},\theta_{il}^{r}\right)$ and $\boldsymbol{a}_{t}\left(\phi_{il}^{t},\theta_{il}^{t}\right)$ respectively. 
As $N_{r},N_{t}$ grow to infinity, the matrices $\boldsymbol{A}_{r},\boldsymbol{A}_{t}$ become orthogonal. 
Hence, in a noise-limited system, $\boldsymbol{V}$ is asymptotically equal to the steering vectors (up to some normalization constant) and thus unimodular. 

The above example can be generalized easily to a hybrid fully-connected phase shifter combiner rather than a fully-digital one. 

To summarize the results of this section, MaGiQ is a low complexity algorithm that yields low MSE with very few RF chains. When the number of chains increases, Alt-Mag can be used with other methods to solve problem (\ref{eq:OptFF}) in its first step. The choice of algorithm should be made according to the complexity requirement. For example, if the computational load is not a limiting factor, then MO-AltMin from \cite{yu_alternating_2016} is a good choice. 

\section{Analog Combiner Design}\label{sec:Combiner}
We now turn to design the analog combiner 
\begin{equation}
\boldsymbol{W}=\boldsymbol{W}_{BB}\boldsymbol{W}_{RF}.\label{eq:HybridDec}
\end{equation}
For this purpose we assume a fixed precoder, that is, $\bar{\boldsymbol{H}}$ in (\ref{eq:H_bar}) is fixed and known. 

In (\ref{eq:Wbb}), we derived the optimal $N_{RF}^r\times N_{RF}^r$ digital combiner, given the hybrid decomposition (\ref{eq:HybridDec}), and reduced the problem to $\boldsymbol{W}_{RF}$ alone. 
A different approach is to first allow the entire $N_r\times N_{RF}^r$ hybrid combiner to be digital, and then try and approximate the fully-digital solution with a hybrid decomposition, similar to the precoder optimization in the previous section.   

This approach was adopted in \cite{ayach_spatially_2014,kim_mse-based_2015}. 
There, the authors begun by allowing $\boldsymbol{W}$ to be fully-digital, that is $\boldsymbol{W}\in\mathbb{C}^{N_r\times N_{RF}^r}$, and solved problem (\ref{eq:Opt0}) for fixed precoders. This yields the MMSE estimator of $\boldsymbol{s}$ from $\boldsymbol{y}$, given by
\begin{equation}
\boldsymbol{W}_{mmse} = \boldsymbol{B}^{-1}\bar{\boldsymbol{H}}, \label{eq:Wmse}
\end{equation}
with
\begin{equation}
\boldsymbol{B}=\bar{\boldsymbol{H}}\bar{\boldsymbol{H}}^{*}+\boldsymbol{R}_{z}.\label{eq:B}
\end{equation}
It was shown in \cite{ayach_spatially_2014} that minimizing (\ref{eq:Opt0}) (for fixed precoders) is equivalent to solving the following optimization problem:   
\begin{equation}
\begin{aligned}
\underset{_{\boldsymbol{W}_{RF},\boldsymbol{W}_{BB}}}{\min} & 
\Vert \boldsymbol{B}^{\frac{1}{2}}\left(\boldsymbol{W}_{mmse}-\boldsymbol{W}_{RF}\boldsymbol{W}_{BB}\right)\Vert_{F}^{2}\\ 
\text{s.t.}:\;\;\;\;\; & \boldsymbol{W}_{RF}\in\mathcal{W}^{N_{r}\times N_{RF}^{r}}.
\end{aligned}\label{eq:HeathProb}
\end{equation} 
That is, finding the hybrid decomposition of $\boldsymbol{W}_{mmse}$ that minimizes the weighted norm in (\ref{eq:HeathProb}).

Unlike the precoder case, here (\ref{eq:HeathProb}) and (\ref{eq:Opt0}) are equivalent. Motivated by this, the authors in \cite{ayach_spatially_2014,kim_mse-based_2015} assumed a multipath channel structure as in (\ref{eq:mmWaveChannel}), which implies that the optimal combiner is a sparse sum of steering vectors. They then suggested a variation of simultaneous OMP (SOMP) to solve (\ref{eq:HeathProb}), that chooses the analog combiner vectors from a steering dictionary and sets the digital combiner as the corresponding weights.  However, this algorithm suffers from degraded performance due to the dictionary constraint. 

By dropping the weights matrix $\boldsymbol{B}^{\frac{1}{2}}$, (\ref{eq:HeathProb}) becomes 
\begin{equation}
\begin{aligned}
\underset{_{\boldsymbol{W}_{RF},\boldsymbol{W}_{BB}}}{\min} & \;\Vert\boldsymbol{W}_{mmse}-\boldsymbol{W}_{RF}\boldsymbol{W}_{BB}\Vert_{F}^{2} \\ 
\text{s.t.}:\;\;\;\;\; & \;\boldsymbol{W}_{RF}\in\mathcal{W}^{N_{t}\times N_{RF}^{t}}, 
\end{aligned}\label{eq:approxW}
\end{equation} 
which constitutes an upper bound on (\ref{eq:HeathProb}) (divided by the constant factor $\Vert\boldsymbol{B}^{\frac{1}{2}}\Vert_{F}^2$), and is identical (up to the power constraint) to the precoder problem (\ref{eq:approxF}) in the previous section. Hence the methods in 
\cite{alkhateeb_hybrid_2013,lee_hybrid_2015,mendez-rial_hybrid_2016,yu_alternating_2016,yu_partially-connected_2017} can be used to solve it. However, solving (\ref{eq:approxW}) is only optimizing an upper bound on the original objective, and is thus sub-optimal. 

\subsection{MaGiQ for Combiner Design}
As opposed to previous works, we impose the hybrid decomposition (\ref{eq:HybridDec}), and optimize (\ref{eq:Opt0}) over both $\boldsymbol{W}_{BB}$ and $\boldsymbol{W}_{RF}$, as in Section~\ref{sec:DataEst}. This leads to the problem of minimizing (\ref{eq:MSE}) over $\boldsymbol{W}_{RF}$, which is equivalent to
\begin{equation}
\begin{aligned}
\underset{_{\boldsymbol{W}_{RF}}}{\max\;\;\;}&\;
\text{tr}\left(\boldsymbol{W}_{RF}^{*}\boldsymbol{A}\boldsymbol{W}_{RF}
\left[\boldsymbol{W}_{RF}^{*}\boldsymbol{B}\boldsymbol{W}_{RF}\right]^{-1}\right)\\
\text{s.t.}:\;\;\;&
\;\boldsymbol{W}_{RF}\in\mathcal{W}^{N_{r}\times N_{RF}^{r}}, \\
\end{aligned}
\label{eq:WrfOpt}
\end{equation} 
with $\boldsymbol{A}=\bar{\boldsymbol{H}}\bar{\boldsymbol{H}}^{*}$, $\boldsymbol{B}$ as in (\ref{eq:B}), and $\mathcal{W}$ the feasible set of the given hardware scheme as described in Section \ref{sec:SignalModel}. Note that since $\boldsymbol{R}_{z}$ is positive definite, it follows that $\boldsymbol{B}$ is invertible.

\begin{prop}
	The optimal fully-digital solution to (\ref{eq:WrfOpt}), is given by 
	\begin{equation}
	\boldsymbol{W}_{opt} = \boldsymbol{B}^{-\frac{1}{2}}\boldsymbol{U}\boldsymbol{S}, \label{eq:optimalW}
	\end{equation}
	where $\boldsymbol{U}$ contains the first $N_{RF}^r$ eigenvectors of the matrix $\boldsymbol{B}^{-\frac{1}{2}}\boldsymbol{A}\boldsymbol{B}^{-\frac{1}{2}}$, with $\boldsymbol{A}=\bar{\boldsymbol{H}}\bar{\boldsymbol{H}}^{*}$, and $\bar{\boldsymbol{H}},\boldsymbol{B}$ are as in (\ref{eq:H_bar}), and (\ref{eq:B}). The matrix $\boldsymbol{S}$ is any $N_{RF}^r\times N_{RF}^r$ invertible matrix.
\end{prop}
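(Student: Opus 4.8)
The plan is to recognize the objective as a generalized trace-ratio (Rayleigh-quotient-type) functional and to reduce it to a standard symmetric eigenvalue problem by whitening the denominator. First I would observe that the objective $g(\boldsymbol{W})=\text{tr}\left(\boldsymbol{W}^{*}\boldsymbol{A}\boldsymbol{W}\left[\boldsymbol{W}^{*}\boldsymbol{B}\boldsymbol{W}\right]^{-1}\right)$ is invariant under the right action $\boldsymbol{W}\mapsto\boldsymbol{W}\boldsymbol{S}$ for any invertible $\boldsymbol{S}$: substituting and using $[\boldsymbol{S}^{*}\boldsymbol{X}\boldsymbol{S}]^{-1}=\boldsymbol{S}^{-1}\boldsymbol{X}^{-1}\boldsymbol{S}^{-*}$ together with the cyclic property of the trace, the factors $\boldsymbol{S}$ cancel and $g(\boldsymbol{W}\boldsymbol{S})=g(\boldsymbol{W})$. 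This alone explains the arbitrary invertible matrix $\boldsymbol{S}$ in the claimed solution and shows it suffices to exhibit a single maximizer.

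Next I would whiten the denominator. Since $\boldsymbol{R}_{z}$ is positive definite, $\boldsymbol{B}=\bar{\boldsymbol{H}}\bar{\boldsymbol{H}}^{*}+\boldsymbol{R}_{z}$ is positive definite, so $\boldsymbol{B}^{1/2}$ exists and is invertible. Setting $\boldsymbol{G}=\boldsymbol{B}^{1/2}\boldsymbol{W}$ gives $\boldsymbol{W}^{*}\boldsymbol{B}\boldsymbol{W}=\boldsymbol{G}^{*}\boldsymbol{G}$ and $\boldsymbol{W}^{*}\boldsymbol{A}\boldsymbol{W}=\boldsymbol{G}^{*}\boldsymbol{C}\boldsymbol{G}$ with $\boldsymbol{C}=\boldsymbol{B}^{-1/2}\boldsymbol{A}\boldsymbol{B}^{-1/2}$, which is Hermitian and positive semidefinite because $\boldsymbol{A}=\bar{\boldsymbol{H}}\bar{\boldsymbol{H}}^{*}$ is. The objective becomes $\text{tr}\left(\boldsymbol{G}^{*}\boldsymbol{C}\boldsymbol{G}\left[\boldsymbol{G}^{*}\boldsymbol{G}\right]^{-1}\right)$, which inherits the same right-invariance. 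Using this invariance with the thin QR (or polar) factor of a full-column-rank $\boldsymbol{G}$, I can assume without loss of generality that $\boldsymbol{G}^{*}\boldsymbol{G}=\boldsymbol{I}$, reducing the problem to maximizing $\text{tr}\left(\boldsymbol{G}^{*}\boldsymbol{C}\boldsymbol{G}\right)$ over matrices with orthonormal columns.

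By the Ky Fan theorem (Rayleigh--Ritz), the maximum of $\text{tr}\left(\boldsymbol{G}^{*}\boldsymbol{C}\boldsymbol{G}\right)$ over $\boldsymbol{G}^{*}\boldsymbol{G}=\boldsymbol{I}$ equals the sum of the $N_{RF}^{r}$ largest eigenvalues of $\boldsymbol{C}$ and is attained when the columns of $\boldsymbol{G}$ form an orthonormal basis of the corresponding top eigenspace, e.g.\ $\boldsymbol{G}=\boldsymbol{U}$. Undoing the whitening gives $\boldsymbol{W}=\boldsymbol{B}^{-1/2}\boldsymbol{U}$, and reinstating the invariance from the first step yields the full family $\boldsymbol{W}_{opt}=\boldsymbol{B}^{-1/2}\boldsymbol{U}\boldsymbol{S}$ for arbitrary invertible $\boldsymbol{S}$, as claimed. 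I expect the main obstacle to be making the normalization step rigorous: I must argue that restricting to full-column-rank $\boldsymbol{G}$ (so that $[\boldsymbol{G}^{*}\boldsymbol{G}]^{-1}$ is defined and the supremum is not chased toward a rank-deficient boundary) loses nothing, and then apply Ky Fan carefully when the $N_{RF}^{r}$-th and $(N_{RF}^{r}+1)$-th eigenvalues of $\boldsymbol{C}$ coincide, in which case the maximizing subspace, and hence $\boldsymbol{U}$, is not unique, though the optimal value and the stated form of the solution remain valid.
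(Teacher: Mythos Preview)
Your proposal is correct and follows essentially the same approach as the paper: the paper also substitutes $\boldsymbol{Q}=\boldsymbol{B}^{1/2}\boldsymbol{W}_{RF}$, rewrites the objective as $\text{tr}(\boldsymbol{P}_{\boldsymbol{Q}}\boldsymbol{B}^{-1/2}\boldsymbol{A}\boldsymbol{B}^{-1/2})$, passes to an orthonormal basis $\boldsymbol{V}$ of $\mathcal{R}(\boldsymbol{Q})$ (your QR/normalization step), and then invokes the top-eigenvector characterization to conclude $\boldsymbol{V}=\boldsymbol{U}$ and hence $\boldsymbol{W}_{opt}=\boldsymbol{B}^{-1/2}\boldsymbol{U}\boldsymbol{S}$. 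Your explicit statement of the right-invariance under invertible $\boldsymbol{S}$ and your care about rank and eigenvalue multiplicity are refinements the paper leaves implicit, but the argument is the same.
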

\begin{proof}
	The fully-digital problem corresponding to (\ref{eq:WrfOpt}) is
	\begin{equation}
	\begin{aligned}
	&\underset{_{\boldsymbol{W}_{RF}}}{\max}\;\;\;
	\text{tr}\left(\boldsymbol{W}_{RF}^{*}\boldsymbol{A}\boldsymbol{W}_{RF}
	\left[\boldsymbol{W}_{RF}^{*}\boldsymbol{B}\boldsymbol{W}_{RF}\right]^{-1}\right).&\\
	\end{aligned}
	\label{eq:WrfFullyDig}
	\end{equation}  
	Let $\boldsymbol{Q}\in\mathbb{C}^{N_{r}\times N_{RF}^r}$ be defined by $\boldsymbol{Q}=\boldsymbol{B}^{\frac{1}{2}}\boldsymbol{W}_{RF}$. 
	Then the objective in (\ref{eq:WrfFullyDig}) equals to $\text{tr}(\boldsymbol{P}_{\boldsymbol{Q}}\boldsymbol{B}^{-\frac{1}{2}}\boldsymbol{A}\boldsymbol{B}^{-\frac{1}{2}})$.
	Let $\boldsymbol{V}$ be a $N_r\times N_{RF}^r$ matrix with orthogonal columns that span $\mathcal{R}(\boldsymbol{Q})$. 
	Then, $\boldsymbol{P}_{\boldsymbol{Q}}=\boldsymbol{V}\boldsymbol{V}^{*}$ and $\boldsymbol{Q}=\boldsymbol{V}\boldsymbol{S}$
	for some invertible $N_{RF}^r\times N_{RF}^r$ matrix $\boldsymbol{S}$.
	Thus, the objective becomes $\text{tr}(\boldsymbol{V}^{*}\boldsymbol{B}^{-\frac{1}{2}}\boldsymbol{A}\boldsymbol{B}^{-\frac{1}{2}}\boldsymbol{V})$ which is maximized by choosing $\boldsymbol{V}$ as the first $N_{RF}^r$ eigenvectors of $\boldsymbol{B}^{-\frac{1}{2}}\boldsymbol{A}\boldsymbol{B}^{-\frac{1}{2}}$, i.e. $\boldsymbol{V}=\boldsymbol{U}$.
	Therefore,
	\begin{equation*}
	\boldsymbol{Q}=\boldsymbol{U}\boldsymbol{S},
	\end{equation*}
	which yields the solution (\ref{eq:optimalW}).
\end{proof}
Similarly to the previous section, we can now approximate the optimal unconstrained solution (\ref{eq:optimalW}) with an analog combiner, while looking for an $\boldsymbol{S}$ that yields a minimal approximation gap between the two. 

In (\ref{eq:optimalW}), $\boldsymbol{S}$ is only required to be invertible, in contrast to the unitary constraint in (\ref{eq:Fopt}). Hence, limiting it to $\mathcal{U}^{N_{RF}^r}$ is  restrictive. However, this is a convenient way to enforce the invertibility constraint. The optimization problem we consider is therefore
\begin{equation}
\begin{aligned}
\underset{_{\boldsymbol{W}_{RF},\boldsymbol{S}}}{\min\;\;} & \;\Vert \boldsymbol{B}^{-\frac{1}{2}}\boldsymbol{U}\boldsymbol{S}-\boldsymbol{W}_{RF}\Vert_{F}^{2} \\ 
\;\;\;\text{s.t.}:\;\;\; & \;\boldsymbol{W}_{RF}\in\mathcal{W}^{N_{r}\times N_{RF}^{r}},\\
&\;\boldsymbol{S}\in\mathcal{U}^{N_{RF}^r},
\end{aligned}\label{eq:WSopt}
\end{equation}
which can be solved using the MaGiQ approach presented in Section \ref{sec:Precoder}. Note that here Alt-MaG and MaGiQ coincide, since in step 1 of Alt-MaG the optimization is over $\boldsymbol{W}_{RF}$ alone. Once the algorithm converges, $\boldsymbol{W}_{BB}$ is calculated as in (\ref{eq:Wbb}).

MaGiQ enjoys good performance with very few RF chains, but when the number of chains increases, other methods are preferable.  
One of the main factors that degrade MaGiQ, is that it only optimizes a bound \eqref{eq:approxW} on \eqref{eq:HeathProb}, rather than directly maximizing  \eqref{eq:HeathProb} (which is equivalent to (\ref{eq:WrfOpt})). 
Next, we exploit the special structure of the ratio-trace objective in (\ref{eq:WrfOpt}), and construct a greedy method for its direct maximization over $\boldsymbol{W}_{RF}$. The proposed GRTM algorithm solves a ratio-of-scalars problem at each step and achieves low MSE when the number of RF chains increases. 

Both GRTM and MaGiQ enjoy low complexity and each has its merits.  
For a small number of RF chains or a noise-limited case with a fully connected phase shifters network and channel that has unimodular singular vectors, one should choose MaGiQ. For a more general channel model and increasing number of RF chains, GRTM is preferable.  

\subsection{GRTM - Greedy Ratio Trace Maximization}\label{sec:GreedyAlg}
We now describe the GRTM approach for directly solving (\ref{eq:WrfOpt}). The concept of GRTM is to solve (\ref{eq:WrfOpt}) in a greedy manner, where in each iteration we add one RF chain and choose the optimal combiner vector to add to the previously $K$ selected vectors, $1\leq K\leq N_{RF}^{r}$. 

Assume we have a solution $\boldsymbol{W}_{RF}^{\left(K\right)}\in\mathcal{W}^{N_r\times K}$ for the $K$-sized problem, i.e. (\ref{eq:WrfOpt}) with $N_{RF}^r=K$. We now want to add an additional RF chain, and compute the optimal column $\boldsymbol{w}$  such that $\boldsymbol{W}_{RF}^{\left(K+1\right)}=[\boldsymbol{W}_{RF}^{\left(K\right)}\;\boldsymbol{w}]$. 
First, note that in order for $\boldsymbol{W}_{RF}^{\left(K+1\right)\;*}\boldsymbol{B}\boldsymbol{W}_{RF}^{\left(K+1\right)}$ to be invertible and the objective in (\ref{eq:WrfOpt}) to be well defined for the $\left(K+1\right)$-sized case, we require $\boldsymbol{w}\notin\mathcal{R}(\boldsymbol{W}_{RF}^{\left(K\right)})$. In practice, this condition implies that each RF chain contributes new and independent information with respect to  the other chains.
Given this condition, the $\left(K+1\right)$-sized optimization problem is 
\begin{equation}
\begin{aligned}
\underset{\boldsymbol{w}}{\max}\;\;\;&\;
\text{tr}\left(\boldsymbol{W}_{RF}^{\left(K+1\right)\;*}\boldsymbol{A}\boldsymbol{W}_{RF}^{\left(K+1\right)}
\left[\boldsymbol{W}_{RF}^{\left(K+1\right)\;*}\boldsymbol{B}\boldsymbol{W}_{RF}^{\left(K+1\right)}\right]^{-1}\right)\\
\text{s.t.}:\;\;\;\;&\;
\boldsymbol{w}\in\mathcal{W}^{N_r\times 1},\;\boldsymbol{w}\notin\mathcal{R}\left(\boldsymbol{W}_{RF}^{\left(K\right)}\right).
\end{aligned}
\label{eq:KOpt}
\end{equation} 

In the next proposition, we show that (\ref{eq:KOpt}) is equivalent to solving the following vector optimization problem, referred to as the base case: 
\begin{equation}
\begin{aligned}
\underset{\boldsymbol{w}\;\;}{\max\;\;}\;\;\;&
\frac{\boldsymbol{w}^{*}\boldsymbol{C}\boldsymbol{w}}
{\boldsymbol{w}^{*}\boldsymbol{D}\boldsymbol{w}}&\\
\text{s.t.}:\;\;\;\;\;\;
&\boldsymbol{w}\in\mathcal{W}^{N_r\times 1},\;\boldsymbol{w}^{*}\boldsymbol{D}\boldsymbol{w}>0, \\
\end{aligned}
\label{eq:VectorProb}
\end{equation} 
where $\boldsymbol{D}$ and $\boldsymbol{C}$ are specific $N_r\times N_r$ positive semi definite (psd) matrices.

Using Proposition \ref{prop:GRTM} leads to the GRTM solution of (\ref{eq:WrfOpt}), where in each iteration we choose the best column vector to add to the previously selected combiner vectors by solving (\ref{eq:VectorProb}). The GRTM algorithm is summarized in Algorithm \ref{algo:Greedy}.
\begin{prop}\label{prop:GRTM}
	Problems (\ref{eq:VectorProb}) and (\ref{eq:KOpt}) are equivalent with 
	\begin{equation*}
	\begin{aligned}
	&    \boldsymbol{D}=\boldsymbol{B}^{\frac{1}{2}}\left(\boldsymbol{I}_{N_r}-\boldsymbol{P}_{\boldsymbol{B}^{\frac{1}{2}}\boldsymbol{W}_{RF}^{\left(K\right)}}\right)\boldsymbol{B}^{\frac{1}{2}},\\
	&\gamma=\text{tr}\left(\boldsymbol{P}_{\boldsymbol{B}^{\frac{1}{2}}\boldsymbol{W}_{RF}^{\left(K\right)}}\boldsymbol{B}^{-\frac{1}{2}}\boldsymbol{A}^{\frac{1}{2}}\boldsymbol{B}^{-\frac{1}{2}}\right),\\ 
	&\boldsymbol{G}=\boldsymbol{B}^{\frac{1}{2}}\boldsymbol{P}_{\boldsymbol{B}^{\frac{1}{2}}\boldsymbol{W}_{RF}^{\left(K\right)}} \boldsymbol{B}^{-\frac{1}{2}}\boldsymbol{A}^{\frac{1}{2}}-\boldsymbol{A}^{\frac{1}{2}},\\
	&\boldsymbol{C}=\gamma\boldsymbol{D}+\boldsymbol{G}\boldsymbol{G}^{*}.
	\end{aligned}
	\end{equation*}
\end{prop}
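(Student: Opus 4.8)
The plan is to collapse the matrix ratio-trace objective in (\ref{eq:KOpt}) to a scalar generalized Rayleigh quotient in the new column $\boldsymbol{w}$, by reusing the projection identity already derived for the fully-digital combiner together with a rank-one update of an orthogonal projector. Writing $\boldsymbol{M}=\boldsymbol{B}^{-\frac{1}{2}}\boldsymbol{A}\boldsymbol{B}^{-\frac{1}{2}}$ and $\boldsymbol{Q}=\boldsymbol{B}^{\frac{1}{2}}\boldsymbol{W}_{RF}$, the proof of the previous proposition already shows that $\text{tr}\big(\boldsymbol{W}_{RF}^{*}\boldsymbol{A}\boldsymbol{W}_{RF}[\boldsymbol{W}_{RF}^{*}\boldsymbol{B}\boldsymbol{W}_{RF}]^{-1}\big)=\text{tr}(\boldsymbol{P}_{\boldsymbol{Q}}\boldsymbol{M})$ for any full-column-rank $\boldsymbol{W}_{RF}$. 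I would apply this to $\boldsymbol{W}_{RF}^{(K+1)}=[\boldsymbol{W}_{RF}^{(K)}\;\boldsymbol{w}]$, setting $\boldsymbol{Q}^{(K)}=\boldsymbol{B}^{\frac{1}{2}}\boldsymbol{W}_{RF}^{(K)}$, $\boldsymbol{q}=\boldsymbol{B}^{\frac{1}{2}}\boldsymbol{w}$, and $\boldsymbol{P}=\boldsymbol{P}_{\boldsymbol{Q}^{(K)}}$, so that the $(K+1)$-objective equals $\text{tr}(\boldsymbol{P}_{[\boldsymbol{Q}^{(K)}\;\boldsymbol{q}]}\boldsymbol{M})$.

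The key structural step is the rank-one projector decomposition. Let $\boldsymbol{q}_{\perp}=(\boldsymbol{I}-\boldsymbol{P})\boldsymbol{q}$. Because $\boldsymbol{B}^{\frac{1}{2}}$ is invertible, the constraint $\boldsymbol{w}\notin\mathcal{R}(\boldsymbol{W}_{RF}^{(K)})$ is equivalent to $\boldsymbol{q}\notin\mathcal{R}(\boldsymbol{Q}^{(K)})$, i.e. $\boldsymbol{q}_{\perp}\neq\boldsymbol{0}$; this is exactly the condition under which $[\boldsymbol{Q}^{(K)}\;\boldsymbol{q}]$ has full column rank and the projector splits as $\boldsymbol{P}_{[\boldsymbol{Q}^{(K)}\;\boldsymbol{q}]}=\boldsymbol{P}+\boldsymbol{q}_{\perp}\boldsymbol{q}_{\perp}^{*}/(\boldsymbol{q}_{\perp}^{*}\boldsymbol{q}_{\perp})$. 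Taking the trace against $\boldsymbol{M}$ gives the objective $\text{tr}(\boldsymbol{P}\boldsymbol{M})+\boldsymbol{q}_{\perp}^{*}\boldsymbol{M}\boldsymbol{q}_{\perp}/(\boldsymbol{q}_{\perp}^{*}\boldsymbol{q}_{\perp})$, whose first term $\gamma:=\text{tr}(\boldsymbol{P}\boldsymbol{M})$ is independent of $\boldsymbol{w}$.

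Next I would express the remaining ratio in $\boldsymbol{w}$. Substituting $\boldsymbol{q}_{\perp}=(\boldsymbol{I}-\boldsymbol{P})\boldsymbol{B}^{\frac{1}{2}}\boldsymbol{w}$ and using idempotence and Hermiticity of $\boldsymbol{I}-\boldsymbol{P}$, the denominator becomes $\boldsymbol{q}_{\perp}^{*}\boldsymbol{q}_{\perp}=\boldsymbol{w}^{*}\boldsymbol{D}\boldsymbol{w}$ with $\boldsymbol{D}=\boldsymbol{B}^{\frac{1}{2}}(\boldsymbol{I}-\boldsymbol{P})\boldsymbol{B}^{\frac{1}{2}}$, matching the stated $\boldsymbol{D}$. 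For the numerator, $\boldsymbol{q}_{\perp}^{*}\boldsymbol{M}\boldsymbol{q}_{\perp}=\boldsymbol{w}^{*}\boldsymbol{B}^{\frac{1}{2}}(\boldsymbol{I}-\boldsymbol{P})\boldsymbol{M}(\boldsymbol{I}-\boldsymbol{P})\boldsymbol{B}^{\frac{1}{2}}\boldsymbol{w}$; writing $\boldsymbol{E}=\boldsymbol{B}^{\frac{1}{2}}\boldsymbol{P}\boldsymbol{B}^{-\frac{1}{2}}$ so that $\boldsymbol{B}^{\frac{1}{2}}(\boldsymbol{I}-\boldsymbol{P})\boldsymbol{B}^{-\frac{1}{2}}=\boldsymbol{I}-\boldsymbol{E}$ and inserting $\boldsymbol{M}=\boldsymbol{B}^{-\frac{1}{2}}\boldsymbol{A}\boldsymbol{B}^{-\frac{1}{2}}$, the numerator matrix simplifies to $(\boldsymbol{I}-\boldsymbol{E})\boldsymbol{A}(\boldsymbol{I}-\boldsymbol{E})^{*}$. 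Splitting $\boldsymbol{A}=\boldsymbol{A}^{\frac{1}{2}}\boldsymbol{A}^{\frac{1}{2}}$ identifies this with $\boldsymbol{G}\boldsymbol{G}^{*}$ for $\boldsymbol{G}=(\boldsymbol{E}-\boldsymbol{I})\boldsymbol{A}^{\frac{1}{2}}=\boldsymbol{B}^{\frac{1}{2}}\boldsymbol{P}\boldsymbol{B}^{-\frac{1}{2}}\boldsymbol{A}^{\frac{1}{2}}-\boldsymbol{A}^{\frac{1}{2}}$, exactly the stated $\boldsymbol{G}$. Hence the objective equals $\gamma+\boldsymbol{w}^{*}\boldsymbol{G}\boldsymbol{G}^{*}\boldsymbol{w}/(\boldsymbol{w}^{*}\boldsymbol{D}\boldsymbol{w})=\boldsymbol{w}^{*}(\gamma\boldsymbol{D}+\boldsymbol{G}\boldsymbol{G}^{*})\boldsymbol{w}/(\boldsymbol{w}^{*}\boldsymbol{D}\boldsymbol{w})=\boldsymbol{w}^{*}\boldsymbol{C}\boldsymbol{w}/(\boldsymbol{w}^{*}\boldsymbol{D}\boldsymbol{w})$.

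To finish I would verify the bookkeeping. The constraint translates as $\boldsymbol{q}_{\perp}\neq\boldsymbol{0}\iff\boldsymbol{w}^{*}\boldsymbol{D}\boldsymbol{w}>0$, matching (\ref{eq:VectorProb}); and since $\gamma$ enters only as an additive constant in $\boldsymbol{w}$, the two problems share the same maximizers, which is the asserted equivalence. For the psd claims, $\boldsymbol{D}=\boldsymbol{B}^{\frac{1}{2}}(\boldsymbol{I}-\boldsymbol{P})\boldsymbol{B}^{\frac{1}{2}}\succeq\boldsymbol{0}$ and $\boldsymbol{G}\boldsymbol{G}^{*}\succeq\boldsymbol{0}$ are immediate, while $\gamma=\text{tr}(\boldsymbol{P}\boldsymbol{M}\boldsymbol{P})\geq0$ since $\boldsymbol{M}\succeq\boldsymbol{0}$ and $\boldsymbol{P}$ is a projector, so $\boldsymbol{C}=\gamma\boldsymbol{D}+\boldsymbol{G}\boldsymbol{G}^{*}\succeq\boldsymbol{0}$. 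I expect the main obstacle to be the algebraic bookkeeping of the $\boldsymbol{B}^{\frac{1}{2}}$-conjugations in the numerator: one must track that conjugating $\boldsymbol{I}-\boldsymbol{P}$ by $\boldsymbol{B}^{\pm\frac{1}{2}}$ produces the non-Hermitian factor $\boldsymbol{I}-\boldsymbol{E}$, and that the $\boldsymbol{A}^{\frac{1}{2}}$-split is what recasts the real quadratic form as the manifestly psd product $\boldsymbol{G}\boldsymbol{G}^{*}$; everything else reduces to the standard rank-one projector update.
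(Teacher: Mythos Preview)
Your proof is correct and takes a somewhat different route from the paper's. The paper invokes a block-inverse identity (its Lemma~\ref{thm:rank1updtae}) for $\bigl[\boldsymbol{W}_{RF}^{(K)}\;\boldsymbol{w}\bigr]^{*}\boldsymbol{B}\bigl[\boldsymbol{W}_{RF}^{(K)}\;\boldsymbol{w}\bigr]$ and then carries out ``straightforward algebraic operations'' directly on the ratio-trace objective, whereas you first pass to whitened coordinates $\boldsymbol{Q}=\boldsymbol{B}^{1/2}\boldsymbol{W}_{RF}$ via the identity $\text{tr}(\boldsymbol{P}_{\boldsymbol{Q}}\boldsymbol{M})$ already established in the preceding proposition, and then apply the Gram--Schmidt rank-one projector update $\boldsymbol{P}_{[\boldsymbol{Q}^{(K)}\;\boldsymbol{q}]}=\boldsymbol{P}+\boldsymbol{q}_\perp\boldsymbol{q}_\perp^{*}/\Vert\boldsymbol{q}_\perp\Vert^{2}$. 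The two devices are equivalent in content---the block-inverse formula is the algebraic counterpart of the projector decomposition---but your geometric route makes the structure more transparent: the constant $\gamma$, the denominator $\boldsymbol{w}^{*}\boldsymbol{D}\boldsymbol{w}$, and the psd factorization $\boldsymbol{G}\boldsymbol{G}^{*}$ of the numerator all fall out without tracking the $2\times2$ block entries, and the psd claims for $\boldsymbol{C}$ and $\boldsymbol{D}$ become immediate. The paper's treatment of the constraint equivalence $\boldsymbol{w}\notin\mathcal{R}(\boldsymbol{W}_{RF}^{(K)})\iff\boldsymbol{w}^{*}\boldsymbol{D}\boldsymbol{w}>0$ matches yours essentially verbatim. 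As a side remark, your derivation correctly gives $\gamma=\text{tr}\bigl(\boldsymbol{P}\,\boldsymbol{B}^{-1/2}\boldsymbol{A}\boldsymbol{B}^{-1/2}\bigr)$; the $\boldsymbol{A}^{1/2}$ appearing in the proposition's displayed formula for $\gamma$ is evidently a typo for $\boldsymbol{A}$.
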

\begin{proof}
	To prove the proposition, we rely on the following lemma: 
	\begin{lem}\label{thm:rank1updtae}
		Let $\tilde{\boldsymbol{W}}=\left[\boldsymbol{W}\;\boldsymbol{w}\right]$ and denote $\boldsymbol{Q}=\left(\boldsymbol{W}^{*}\boldsymbol{W}\right)^{-1}$. Then \cite{Petersen06thematrix}
		\begin{equation*}
		\left(\tilde{\boldsymbol{W}}^{*}\tilde{\boldsymbol{W}}\right)^{-1}=\left[
		\begin{array}{cc}
		\boldsymbol{Q}+\alpha\boldsymbol{Q}\boldsymbol{W}^{*}\boldsymbol{w}\boldsymbol{w}^{*}\boldsymbol{W}\boldsymbol{Q}^{*} & -\alpha\boldsymbol{Q}\boldsymbol{W}^{*}\boldsymbol{w}\\
		-\alpha\boldsymbol{w}^{*}\boldsymbol{W}\boldsymbol{Q}^{*} & \alpha
		\end{array}
		\right],
		\end{equation*}
		with $\alpha=\frac{1}{\boldsymbol{w}^{*}\boldsymbol{w}-\boldsymbol{w}^{*}\boldsymbol{W}\boldsymbol{Q}\boldsymbol{W}^{*}\boldsymbol{w}}$.
	\end{lem}
	Using Lemma \ref{thm:rank1updtae} and straightforward algebraic operations we get equality between the two objectives. The first constraint is identical in both problems. 
	It remains to show that the second constraints of (\ref{eq:VectorProb}) and (\ref{eq:KOpt}) are equivalent. 
	
	Since $\boldsymbol{B}$ is invertible, $\boldsymbol{w}\in\mathcal{R}(\boldsymbol{W}_{RF}^{\left(K\right)})$ if and only if $\boldsymbol{B}^{\frac{1}{2}}\boldsymbol{w}\in\mathcal{R}(\boldsymbol{B}^{\frac{1}{2}}\boldsymbol{W}_{RF}^{\left(K\right)})$. 
	Therefore, if $\boldsymbol{w}\in\mathcal{R}(\boldsymbol{W}_{RF}^{\left(K\right)})$, then
	\begin{flalign*}
	&\boldsymbol{w}^{*}\boldsymbol{D}\boldsymbol{w}=\boldsymbol{w}^{*}\boldsymbol{B}\boldsymbol{w}-\boldsymbol{w}^{*}\boldsymbol{B}^{\frac{1}{2}} \boldsymbol{P}_{\boldsymbol{B}^{\frac{1}{2}}\boldsymbol{W}_{RF}^{\left(K\right)}}\boldsymbol{B}^{\frac{1}{2}}\boldsymbol{w}\\
	&=\boldsymbol{w}^{*}\boldsymbol{B}\boldsymbol{w}-\boldsymbol{w}^{*}\boldsymbol{B}^{\frac{1}{2}} \boldsymbol{B}^{\frac{1}{2}}\boldsymbol{w}=0.    
	\end{flalign*}
	In the other direction, note that $\boldsymbol{D}$ is a non-negative Hermitian matrix, and can be decomposed as $\boldsymbol{D}=\boldsymbol{Q}\boldsymbol{Q}^{*}$ for some $\boldsymbol{Q}$. Thus, $\boldsymbol{w}^{*}\boldsymbol{D}\boldsymbol{w}=0$ if and only if $\boldsymbol{Q}^{*}\boldsymbol{w}=\boldsymbol{0}$. Multiplying both sides of the equation by $\boldsymbol{Q}$ yields $\boldsymbol{D}\boldsymbol{w}=\boldsymbol{B}^{\frac{1}{2}}(\boldsymbol{I}_{N_r}-\boldsymbol{P}_{\boldsymbol{B}^{\frac{1}{2}}\boldsymbol{W}_{RF}^{\left(K\right)}})\boldsymbol{B}^{\frac{1}{2}}\boldsymbol{w}=\boldsymbol{0}$. Since $\boldsymbol{B}$ is invertible, it follows that 
	$(\boldsymbol{I}_{N_r}-\boldsymbol{P}_{\boldsymbol{B}^{\frac{1}{2}}\boldsymbol{W}_{RF}^{\left(K\right)}})\boldsymbol{B}^{\frac{1}{2}}\boldsymbol{w}=\boldsymbol{0}$, or
	$\boldsymbol{B}^{\frac{1}{2}}\boldsymbol{w}\in\mathcal{R}(\boldsymbol{B}^{\frac{1}{2}}\boldsymbol{W}_{RF}^{\left(K\right)})$, which is equivalent to $\boldsymbol{w}\in\mathcal{R}(\boldsymbol{W}_{RF}^{\left(K\right)})$.  
	Hence, we proved that
	\begin{equation*}
	\boldsymbol{w}\in\mathcal{R}\left(\boldsymbol{W}_{RF}^{\left(K\right)}\right)\iff\boldsymbol{w}^{*}\boldsymbol{D}\boldsymbol{w}=0.
	\end{equation*}
	Since $\boldsymbol{D}$ is non-negative definite, $\boldsymbol{w}^{*}\boldsymbol{D}\boldsymbol{w}\geq0$, for all $\boldsymbol{w}$. The previous connection then yields
	\begin{equation*}
	\boldsymbol{w}\notin\mathcal{R}\left(\boldsymbol{W}_{RF}^{\left(K\right)}\right)\iff\boldsymbol{w}^{*}\boldsymbol{D}\boldsymbol{w}>0.
	\end{equation*}        
	As there is equality between the objectives and feasible sets of (\ref{eq:VectorProb}) and (\ref{eq:KOpt}), both problems are equivalent. 
\end{proof}
Problem (\ref{eq:VectorProb}) has several advantages over (\ref{eq:KOpt}): 1) its objective is a scalar ratio and does not involve matrix inversion. 2) this problem is both a constrained ratio-trace, trace-ratio \cite{wang_trace_2007} and a Rayleigh quotient problem, and therefore may be solved using techniques for either of these well known problems. However, these methods require adaptation to fit the additional hardware constraints, which will not be investigated in this work.

One simple method for obtaining a (sub-optimal) solution to (\ref{eq:VectorProb}) is searching over a dictionary $\boldsymbol{\bar{W}}$ with columns in $\mathcal{W}^{N_{r}\times 1}$, and calculating the objective value in (\ref{eq:VectorProb}) for each vector. The vector that corresponds to the largest objective is then chosen. This solution has low computational load thanks to the simplicity of the scalar ratio objective. The choice of dictionary depends on the channel model and hardware constraints. For a fully-connected scheme and a sparse mmWave channel as in (\ref{eq:mmWaveChannel}), a steering vector dictionary as used in \cite{ayach_spatially_2014} exploits the channel structure and yields good performance. For a more general model, Gaussian randomization can be used to construct a good dictionary. In this case, $\bar{\boldsymbol{W}}=\boldsymbol{P}_{\mathcal{W}}\left(\boldsymbol{X}\right)$, with $\boldsymbol{P}_{\mathcal{W}}\left(\cdot\right)$ as defined in Section \ref{sec:SignalModel}, and where $\boldsymbol{X}$ is a random dictionary, commonly chosen as a matrix with columns drawn from $\mathcal{CN}\left(\boldsymbol{0},\boldsymbol{W}_{opt}\boldsymbol{W}_{opt}^{*}\right)$, with $\boldsymbol{W}_{opt}$ from (\ref{eq:optimalW}).
In the simulations section, we used the dictionary based solution with Gaussian randomization. 
To comply with the problem's second constraint, it is necessary to remove the selected column from the dictionary at each iteration. 
The other columns are, with high probability, not in $\mathcal{R}(\boldsymbol{W}_{RF}^{\left(K\right)})$ due to the random complex-Gaussian distribution.

\begin{algorithm} 
	\caption{GRTM}
	\text{\textbf{Input} effective channel $\bar{\boldsymbol{H}}$, interference correlation $\boldsymbol{R}_{z}$}\\
	\text{\textbf{Output} analog combiner $\boldsymbol{W}_{RF}$}\\
	\text{\textbf{Initialize} $\boldsymbol{A}=\bar{\boldsymbol{H}}\bar{\boldsymbol{H}}^{*}$, $\boldsymbol{B}=\bar{\boldsymbol{H}}\bar{\boldsymbol{H}}^{*}+\boldsymbol{R}_{z}$, $\boldsymbol{C}=\boldsymbol{A},\boldsymbol{D}=\boldsymbol{B}$,}
	\\ \text{$\;\;\;\;\;\;\;\;\;\;\;\;\;\;\;\;\boldsymbol{W}_{RF}^{\left(0\right)}=[]$}\\
	\text{\textbf{For} $K=0:N_{RF}^r-1$:}
	\begin{enumerate}
		\item \text{\textbf{Solve} the base case (\ref{eq:VectorProb}) to obtain $\boldsymbol{w}$}
		\item \text{\textbf{Update} $\boldsymbol{W}_{RF}^{\left(K+1\right)}=\left[\boldsymbol{W}_{RF}^{\left(K\right)}\;\boldsymbol{w}\right]$}
		\item \text{\textbf{Update} }
		\begin{equation*}
		\begin{aligned}
		&\boldsymbol{D}=\boldsymbol{B}^{\frac{1}{2}}\left(\boldsymbol{I}_{N_r}-\boldsymbol{P}_{\boldsymbol{B}^{\frac{1}{2}}\boldsymbol{W}_{RF}^{\left(K+1\right)}}\right)\boldsymbol{B}^{\frac{1}{2}}\\
		&\gamma=\text{tr}\left(\boldsymbol{P}_{\boldsymbol{B}^{\frac{1}{2}}\boldsymbol{W}_{RF}^{\left(K+1\right)}}\boldsymbol{B}^{-\frac{1}{2}}\boldsymbol{A}^{\frac{1}{2}}\boldsymbol{B}^{-\frac{1}{2}}\right)\\ 
		&\boldsymbol{G}=\boldsymbol{B}^{\frac{1}{2}}\boldsymbol{P}_{\boldsymbol{B}^{\frac{1}{2}}\boldsymbol{W}_{RF}^{\left(K+1\right)}} \boldsymbol{B}^{-\frac{1}{2}}\boldsymbol{A}^{\frac{1}{2}}-\boldsymbol{A}^{\frac{1}{2}}\\
				&\boldsymbol{C}=\gamma\boldsymbol{D}+\boldsymbol{G}\boldsymbol{G}^{*}              
		\end{aligned}
		\end{equation*}    
	\end{enumerate}\label{algo:Greedy}
\end{algorithm} 

Simulations demonstrate that GRTM has lower MSE than MaGiQ when the number of RF chains increases. In that case, it outperforms other low complexity methods in various SNR, channel models, and hardware scenarios.  

\subsection{Combiner Design for Kronecker Model Channel Estimation}
Problem (\ref{eq:WrfOpt}) arises in other communication problems, so that GRTM can be used to address those problems as well. One such scenario is MIMO channel estimation under a Kronecker model in the interference limited case \cite{spawc_paper,bjornson_framework_2010}.

This problem consists of a single base station (BS) with $N_{BS}$ antennas and $N_{RF}<N_{BS}$ RF chains, and $K$ single-antenna users in a time-synchronized time division duplex (TDD) system. In the uplink channel estimation phase, each user sends $\tau$ training symbols to the BS, during which the channel is assumed to be constant.

The (discrete time) received signal at the BS can be written as
\begin{align}
\boldsymbol{X} & =\boldsymbol{W}_{RF}^{*}\boldsymbol{H}\boldsymbol{S}^{T}+\boldsymbol{Z},\label{eq:KronReceiveSignal}
\end{align}
\normalsize where $\boldsymbol{H}\in\mathbb{C}^{N_{BS}\times K}$ is the desired
channel between the users to the BS, $\boldsymbol{S}\in\mathbb{C}^{\tau\times K}$ is
the pilot sequence matrix, $\boldsymbol{W}_{RF}\in\mathcal{W}^{N_{BS}\times N_{RF}}$ is the analog combiner and $\boldsymbol{Z}$ is the interference. 

Both the channel and interference follow the \textit{doubly correlated Kronecker model} \cite{tulino_impact_2005}
\begin{equation}
\boldsymbol{H}=\boldsymbol{R}_{r}^{\frac{1}{2}}\boldsymbol{\bar{H}}\boldsymbol{R}_{t}^{\frac{1}{2}}\,\,\,,\,\,\,\boldsymbol{Z}=\boldsymbol{R}_{r}^{\frac{1}{2}}\boldsymbol{\bar{Z}}\boldsymbol{Q}^{\frac{1}{2}},\label{eq:KroneckerModel}
\end{equation} 
with $\boldsymbol{R}_{r}\in\mathbb{C}^{N_{bs}\times N_{bs}}$ the BS's receive side correlation matrix, $\boldsymbol{R}_{t}\in\mathbb{C}^{K\times K}$ the transmit side correlation matrix of the users and $\boldsymbol{Q}\in\mathbb{C}^{K\tau\times \tau}$ the transmit side correlation matrix of the interference.
The matrices $\boldsymbol{\bar{H}}^{N_{bs}\times K},\boldsymbol{\bar{Z}}\in\mathbb{C}^{N_{bs}\times \tau}$
have independent entries with i.i.d complex-normal distribution. All correlation matrices are known.  

In the channel estimation phase, the goal is to estimate $\boldsymbol{H}$ from $\boldsymbol{X}$. For this purpose it is desirable to design the analog combiner $\boldsymbol{W}_{RF}$ to minimize the MSE in estimation. As shown in \cite{spawc_paper}, the estimation error of the MMSE estimator for $\boldsymbol{H}$ from $\boldsymbol{X}$ is inversely proportional to
\begin{align}
\mu & \triangleq\mbox{tr}\left(\boldsymbol{W}_{RF}^{*}\boldsymbol{R}_{r}^2\boldsymbol{W}_{RF}\left(\boldsymbol{W}_{RF}^{*}\boldsymbol{R}_{r}\boldsymbol{W}_{RF}\right)^{-1}\right),\label{eq:mu}
\end{align}
which is equal to (\ref{eq:WrfOpt}) with $\boldsymbol{A}=\boldsymbol{R}_{r}^2$ and $\boldsymbol{B}=\boldsymbol{R}_{r}$. 
Hence, all the previously mentioned methods may be used to design $\boldsymbol{W}_{RF}$.

\section{Numerical Experiments}\label{sec:simulations}
We now demonstrate the performance of the suggested beamformers under different hardware constraints. 
Unless mentioned otherwise, the channel model used is the narrow-band mmWave clustered channel defined in (\ref{eq:mmWaveChannel}), with 
$\alpha_{il}\sim\mathcal{NC}\left(0,1\right)$,  $N_{cl}=6$, and $N_{ray}=1$.  
We consider a simple uniform linear array (ULA) with spacing $d=\frac{\lambda}{2}$, where $\lambda$ is the carrier wavelength.
It follows that the steering vectors depend only on the azimuth and are given by
\begin{equation}
\boldsymbol{a}_{q}\left(\phi\right)=\frac{1}{\sqrt{N}}\left[1,e^{j2\pi\sin\left(\phi\right),\cdots,} e^{j2\pi\left(N_{q}-1\right) \sin\left(\phi\right)}\right],
\end{equation}
for $q\in\left\{r,t\right\}$. The angles $\phi_{il}^{t},\phi_{il}^{r}$ are distributed uniformly over the interval $\left[0,2\pi\right)$. 

In all simulations, the number of transmit antennas is $N_t=10$ and the number of receive antennas is $N_r=15$. Unless mentioned otherwise, the interference is a white complex-Gaussian noise with zero mean and variance $\sigma_{z}^2=1$.
For the precoder experiments, we used a fully-digital receiver, i.e. $N_{RF}^r=N_r$, and for the combiner experiments we used a fully-digital transmitter, i.e. $N_{RF}^t=N_t$.

\subsection{Hybrid Precoder Performance}
First, we present the performance of different hybrid precoders. Here, we compare 5 design methods. The first two are the MO\_AltMin algorithm from \cite{yu_alternating_2016}, which performs manifold optimization to minimize (\ref{eq:approxF}) over $\boldsymbol{F}_{RF}$ and $\boldsymbol{F}_{BB}$, and the PE\_AltMin method, also from \cite{yu_alternating_2016}, that assumes $\boldsymbol{F}_{BB}$ to be a scaled orthogonal matrix and employs an algorithm similar to Algorithm \ref{algo:SimpleQuan}, only with $\boldsymbol{F}_{BB}$ calculated in the last step as $\boldsymbol{F}_{BB}=\frac{\sqrt{N_{S}}}{\Vert\boldsymbol{F}_{RF}\boldsymbol{T}\Vert_{F}}\boldsymbol{T}$. 
The third is the SOMP of \cite{ayach_spatially_2014}, that reformulates (\ref{eq:approxF}) as a sparsity problem and chooses the columns of $\boldsymbol{F}_{RF}$ from a dictionary of candidate vectors, and simultaneously solves the corresponding weights $\boldsymbol{F}_{BB}$. 
Last two are our methods, the MaGiQ in Algorithm~\ref{algo:SimpleQuan} and the Alt-MaG method in Algorithm \ref{algo:IterQuan} with the MO\_AltMin solution used to obtain $\boldsymbol{F}_{RF},\boldsymbol{F}_{BB}$ in step~1.
For the AltMin algorithms, we used the Matlab's functions \cite{AltMin}. The dictionary used for the SOMP algorithm consists of 1000 steering vectors $\left\{\boldsymbol{a}_t\left(\frac{2\pi}{1000}q\right)\right\}_{q=1}^{1000}$. As performance measure we consider the gap between the estimation error of the optimal precoder (\ref{eq:Fopt}) and the hybrid one, defined by $\epsilon-\epsilon_{opt}$ with $\epsilon=\frac{1}{N_{s}\cdot Q}\sum_{q=1}^{Q}\Vert \hat{\boldsymbol{s}}-\boldsymbol{s}\Vert^2$ and $Q$ the number of realizations.

We begin by comparing MaGiQ and PE\_AltMin considering a fully-connected phase shifters network S\ref{scheme:FullyPhaseOnly}. As mentioned before, these two algorithms are very similar except for the last step of calculating $\boldsymbol{F}_{BB}$ which is suboptimal in PE\_AltMin. This is demonstrated in Fig.~\ref{fig:PEsim} where MaGiQ's lower MSE can be seen. In the next simulations, PE\_AltMin will not be tested.

Figure \ref{fig:precoder_err} presents the estimation gap with respect to the number of RF chains for the different design methods. The SOMP algorithm suffers from a large performance gap, especially when the number of RF chains is much less than the number of multipath components, i.e. $N_{RF}^t\ll N_c$. MaGiQ enjoys both low complexity and good performance due to the alternating minimization over $\boldsymbol{T}$ in (\ref{eq:OptTF}). The MO-AltMin algorithm achieves low MSE at the cost of long running time. The gap between Alt-MaG and MO\_AltMin demonstrates the potential gain in optimizing over the matrix $\boldsymbol{T}$ in (\ref{eq:OptTF}), in comparison to optimizing only $\boldsymbol{F}_{RF},\boldsymbol{F}_{BB}$ as in (\ref{eq:approxF}). This approach substantially improves the performance, with negligible increase in complexity. 

\begin{figure}
	\centering
	\vspace{-0.45cm}
	\includegraphics[width = 0.5\textwidth]{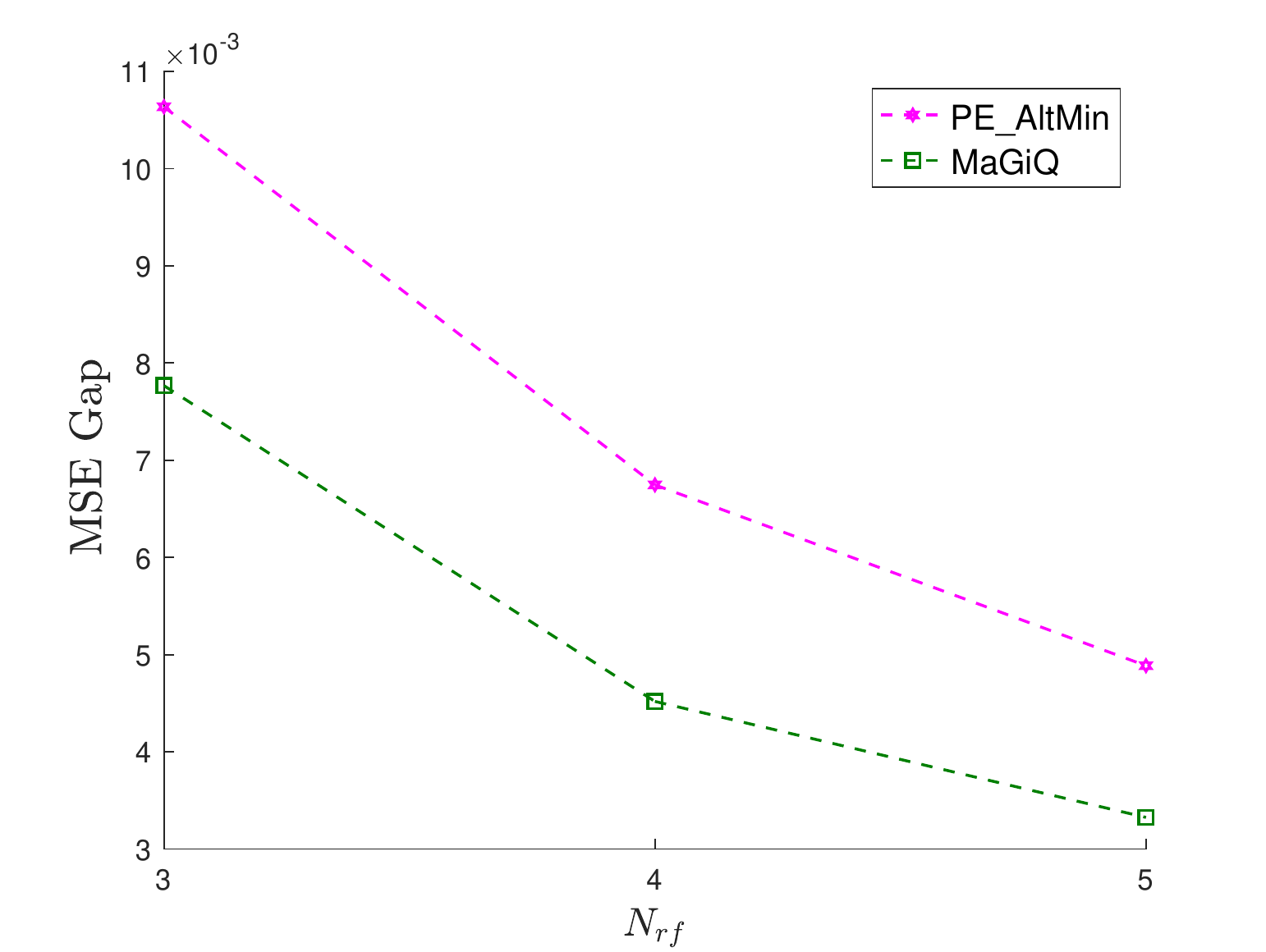}
	\vspace{-0.5cm}
	\protect\caption{Precoder estimation gap $\epsilon-\epsilon_{opt}$ vs. number of RF chains $N_{RF}^t$, for MAGiQ and PE\_AltMin.}\label{fig:PEsim}
	\vspace{0.3cm}
\end{figure}
\begin{figure}
	\centering
	\vspace{-0.5cm}
	\includegraphics[width = 0.5\textwidth]{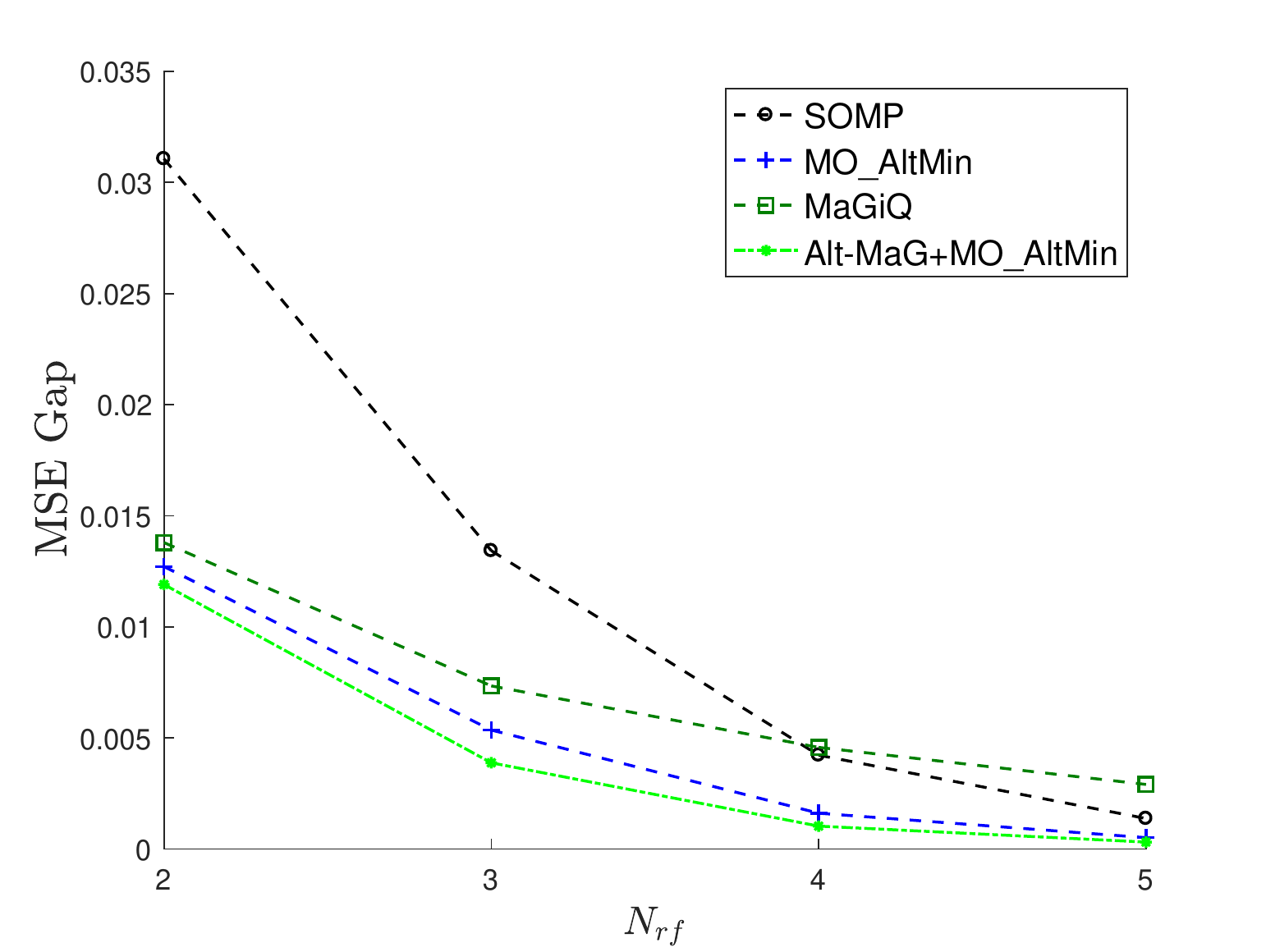}
	\vspace{-0.5cm}
	\protect\caption{Precoder estimation gap $\epsilon-\epsilon_{opt}$ vs. number of RF chains $N_{RF}^t$, for a fully-connected phase shifters network S\ref{scheme:FullyFirst}. }\label{fig:precoder_err}
\end{figure}

\subsection{Hybrid Combiner Performance}
We now investigate the combiner's performance. 
Here, an additional algorithm is tested, the GRTM in Algorithm~\ref{algo:Greedy}, with the dictionary-based method in Section \ref{sec:GreedyAlg} as the base case solution. 
In contrast to the precoder case, here the SOMP algorithm solves the problem (\ref{eq:HeathProb}) as suggested in \cite{ayach_spatially_2014}. 

Figure \ref{fig:comb_err_vs_rf} presents the estimation gap from the optimal combiner for all approaches with the fully-connected phase shifters scheme S\ref{scheme:FullyPhaseOnly}. It can be seen that MO-AltMin provides the best results. However, its runtime is an order-of-magnitude larger than the runtime of the other methods. Again, in the case of very few RF chains, MaGiQ is preferred over the greedy methods GRTM and SOMP, which suffer from large performance gap in that scenario. With additional RF chains, the greedy algorithms outperform the simple MaGiQ, with a slight advantage to GRTM.

Next, we show that in some special cases, MaGiQ coincides with the fully-digital solution and outperforms other methods. 
%
We demonstrate this using the scenario from the asymptotic example given in Section~\ref{sec:Precoder}. Here, $\boldsymbol{H}$ is the mmWave channel (\ref{eq:mmWaveChannel}), with $N_{c}=4$, $N_{RF}^r=4$, and $N_r=150$. In this case, MO-AltMin cannot be used due to its long runtime that grows with the number of antennas. Since $N_{c}\ll N_r$ the asymptotic analysis holds, and $\boldsymbol{H}$ has unimodular singular vectors. Figure~\ref{fig:PrivateCasePractical} shows the estimation error $\epsilon$ of the different algorithms. It can be seen that MaGiQ's performance coincides with the fully-digital combiner. We used a steering dictionary for both SOMP and GRTM.  Since the singular vectors of the channel also have steering structure, the greedy methods performance gap is small, but still exists. 

Next, we investigate the partially connected schemes.  Figure~\ref{fig:comb_err_vs_snr} shows the performance of different combiners for the fixed sub-arrays network S\ref{scheme:PhaseSubArrays} and for the flexible one S\ref{scheme:FlexiLast}. For both scenarios, we set $G=5$. We first notice the additional performance gain of the flexible architecture. This result is natural since the additional switches allow each RF chain to choose the antennas that contribute to it, in contrast to the fixed sub-arrays case, that enforces each RF chain to be connected to a small predetermined array. In both scenarios, the greedy methods outperform MaGiQ, but while SOMP yields better performance for the fixed arrays, when adding switches to allow for flexibility, GRTM is preferred.

The fully-connected network S\ref{scheme:FullyFirst}, that involves switches as well as phase shifters is considered next. Here, we use a general channel model such that $\boldsymbol{H}$ has i.i.d complex-Gaussian entries and the interference $\boldsymbol{z}$ is now a complex-Gaussian vector with arbitrary full rank covariance matrix $\boldsymbol{R}_{z}$. For the SOMP and GRTM algorithms, we used the dictionary $\bar{\boldsymbol{W}}=\boldsymbol{P}_{\mathcal{W}}\left(\boldsymbol{X}\right)$, with $\mathcal{W}$ the feasible set for scheme S\ref{scheme:FullyFirst}, as explained in Section \ref{sec:GreedyAlg}. Figure \ref{fig:general_model} presents the MSE $\epsilon$ as a function of SNR. The complex MO-AltMin algorithm now achieves approximately the same performance as the low complexity GRTM. This is because MO-AltMin cannot use the additional flexibility of the switches and produces only unimodular beamformer vectors. This demonstrates a trade-off between hardware and computational complexity: one may use the simpler architecture S\ref{scheme:FullyPhaseOnly} while achieving the same performance as the complex network S\ref{scheme:FullyFirst}, with the cost of using a heavy computational algorithm such as MO-AltMin. It can be further noticed that SOMP's performance falls short in comparison to others, possibly due to the different interference structure. 
\begin{figure}
	\centering
	\vspace{-0.4cm}
	\includegraphics[width = 0.5\textwidth]{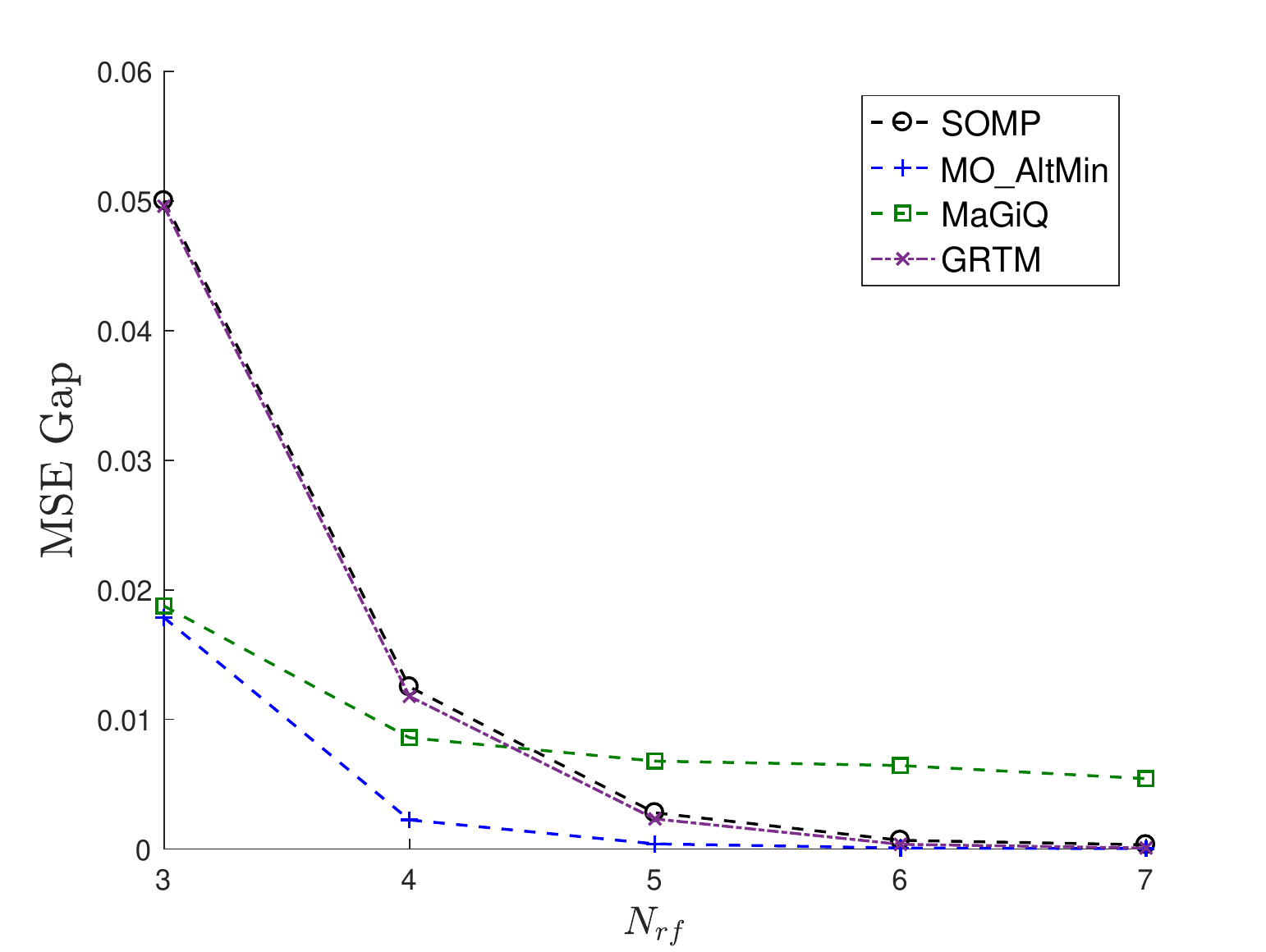}
	\vspace{-0.5cm}
	\protect\caption{Combiner estimation gap $\epsilon-\epsilon_{opt}$ vs.  number of RF chains $N_{RF}^t$, for a fully-connected phase shifters network S\ref{scheme:FullyFirst}. 
		\label{fig:comb_err_vs_rf}}
	\vspace{0.1cm}
\end{figure}
\begin{figure}
	\centering
	\vspace{-0.6cm}
	\includegraphics[width = 0.5\textwidth]{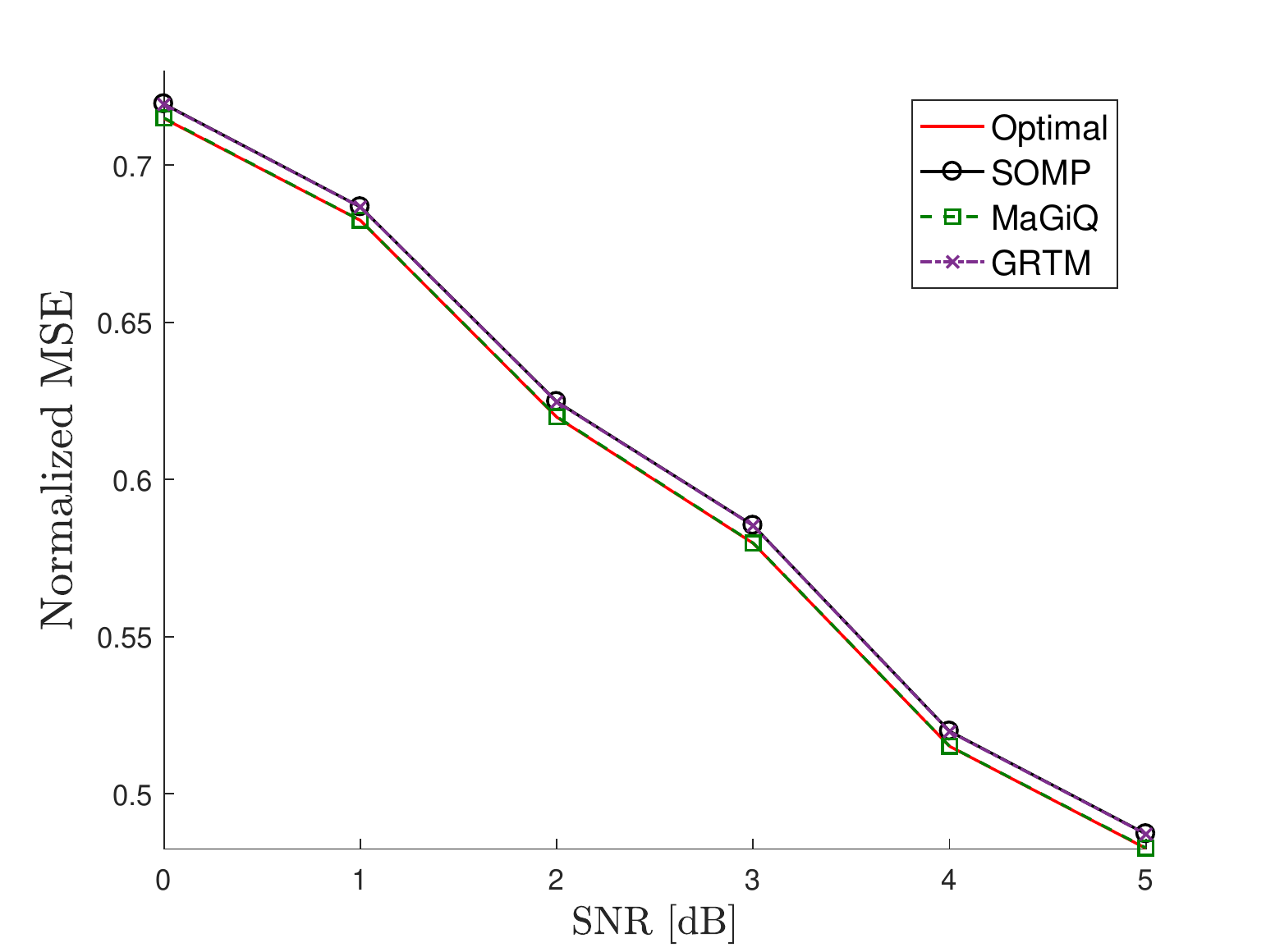}
	\vspace{-0.5cm}
	\protect\caption{Combiner estimation error vs. SNR for a fully-connected phase shifters network S\ref{scheme:FullyPhaseOnly} and mmWave channel with asymptotic number of antennas. 
		\label{fig:PrivateCasePractical}}
	\vspace{0cm}
\end{figure}
\begin{figure}
	\centering
	\vspace{-0.6cm}
	\includegraphics[width = 0.5\textwidth]{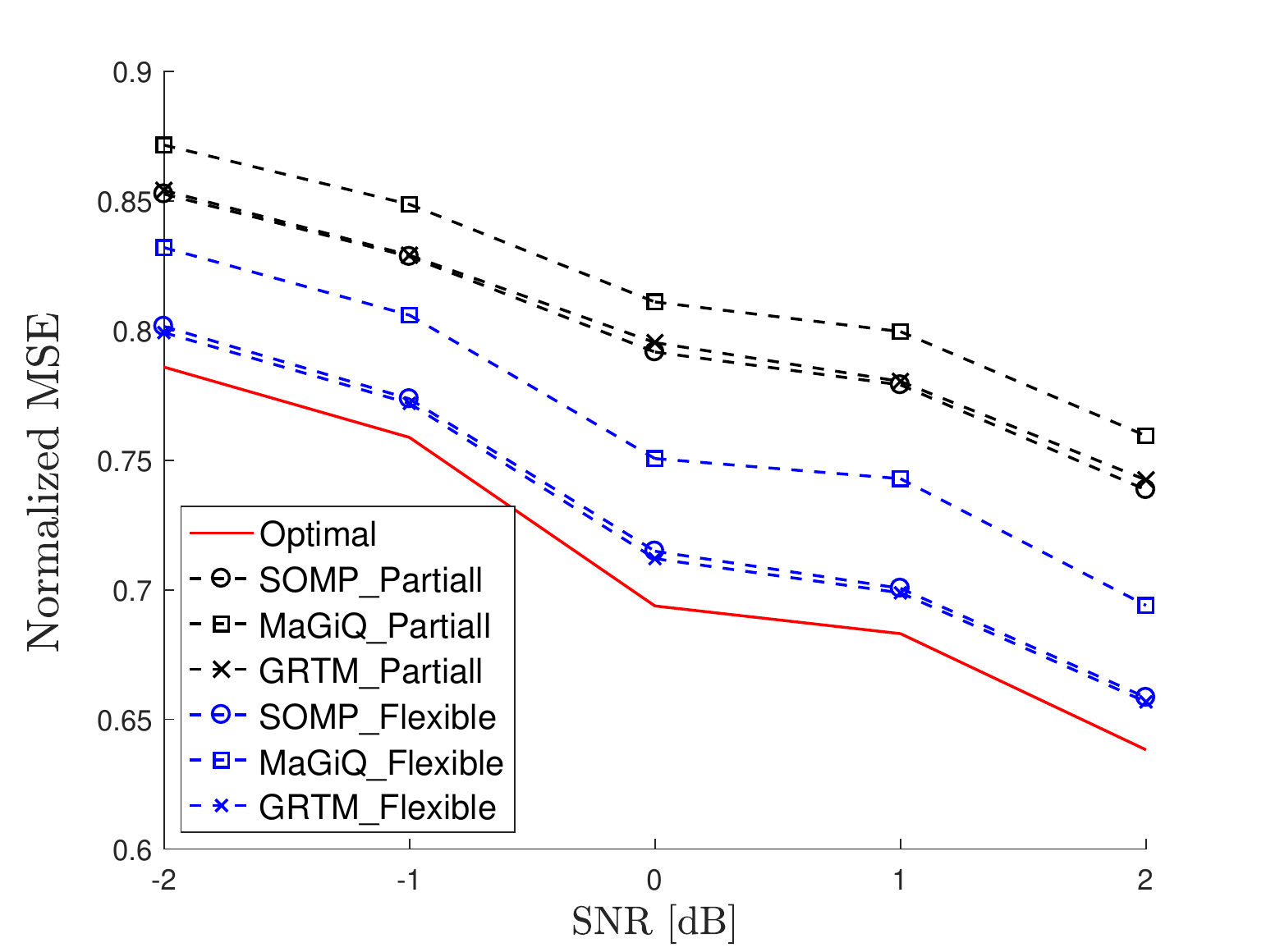}
	\vspace{-0.5cm}
	\protect\caption{Combiner estimation error vs. SNR for the partially connected phase shifters networks S\ref{scheme:PhaseSubArrays} and S\ref{scheme:FlexiLast} with $G=5$. 
		\label{fig:comb_err_vs_snr}}
	\vspace{0.3cm}
\end{figure}
\begin{figure}
	\centering
	\vspace{-0.95cm}
	\includegraphics[width = 0.5\textwidth]{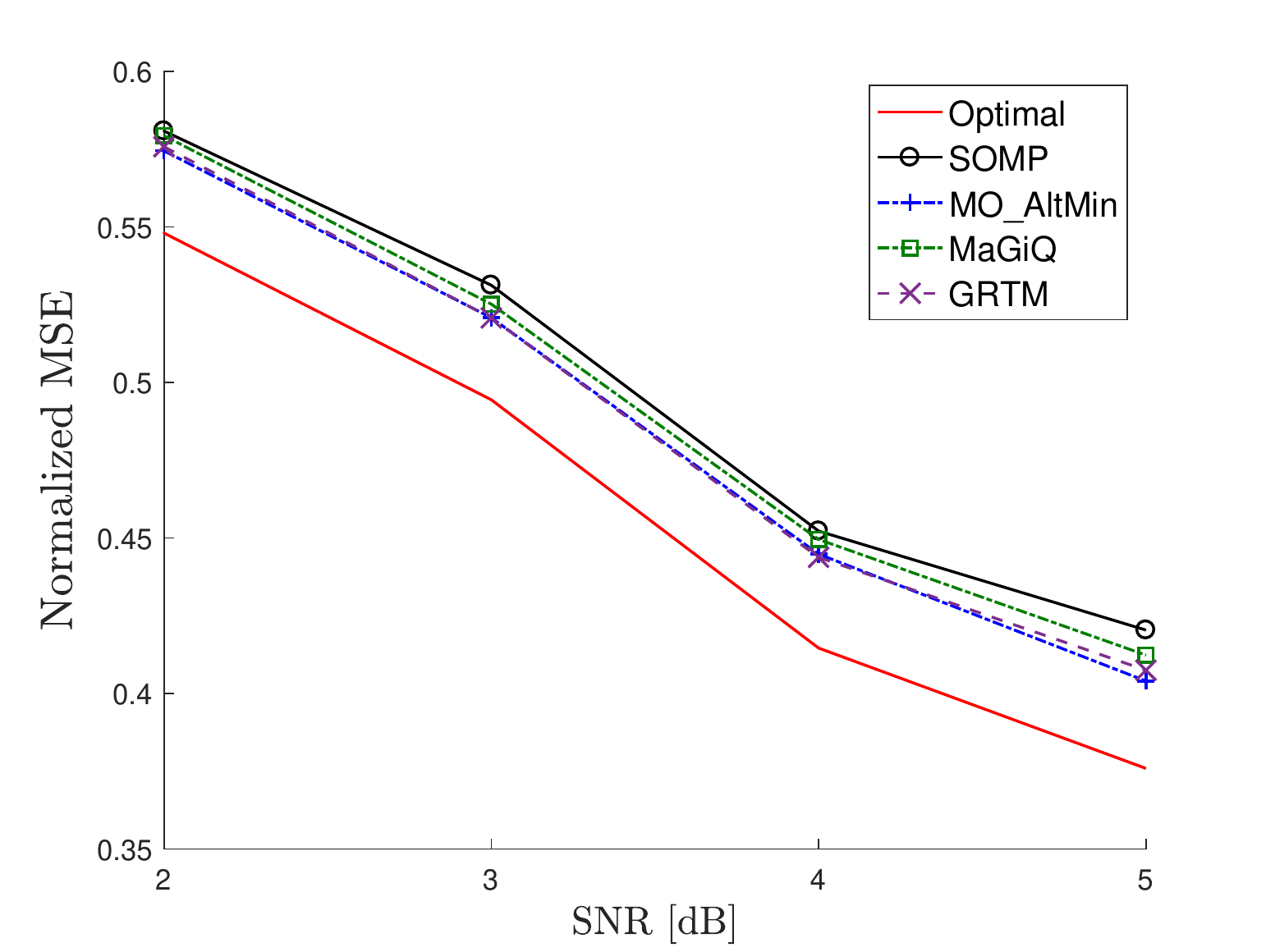}
	\vspace{-0.5cm}
	\protect\caption{Combiner estimation error vs. SNR for the fully-connected phase shifters and switches network S\ref{scheme:FullyFirst} with general channel model and interference. 
		\label{fig:general_model}}
	\vspace{0.1cm}
\end{figure}
\begin{figure}
	\centering
	\vspace{-0.45cm}
	\includegraphics[width = 0.5\textwidth]{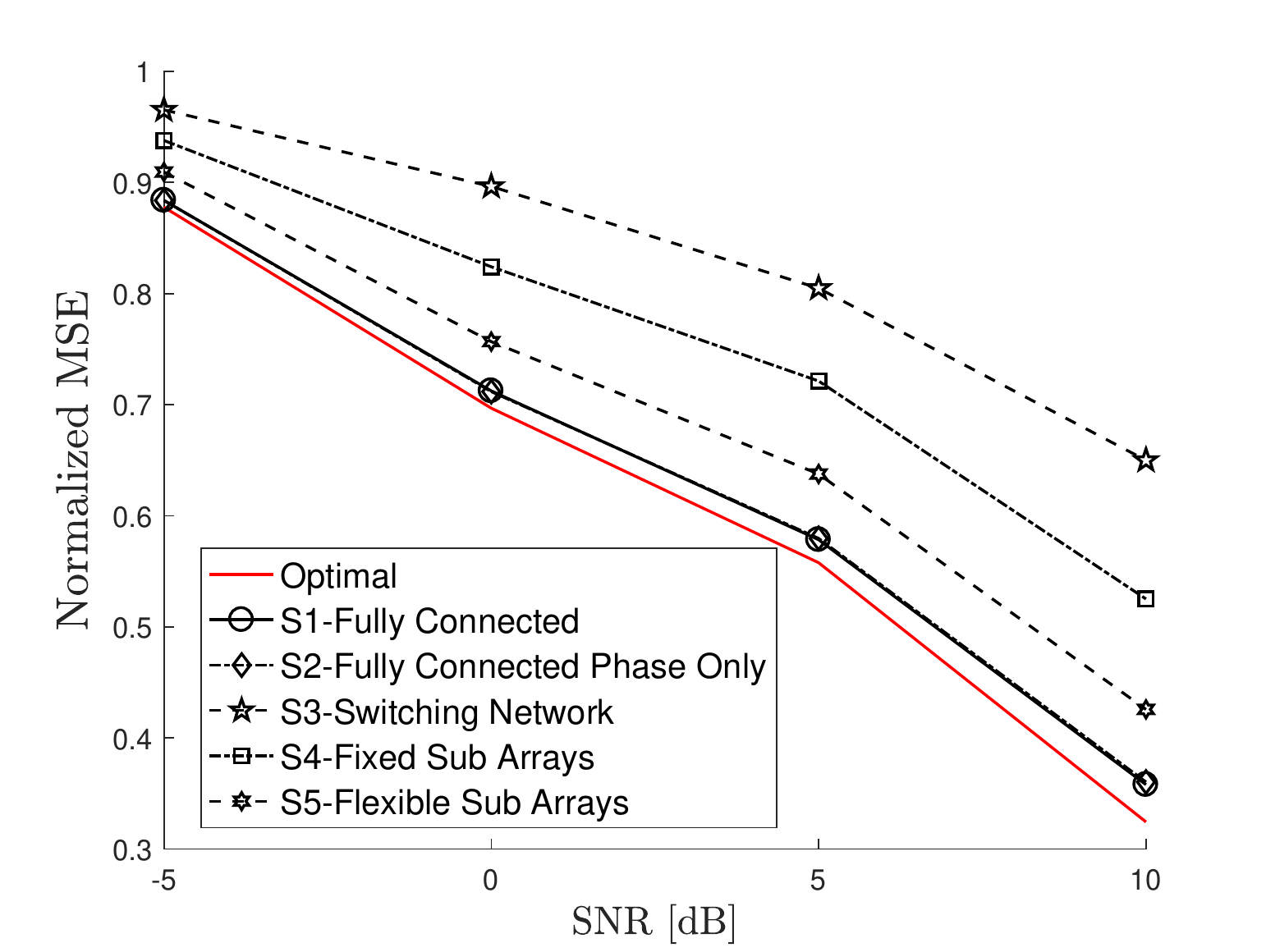}
	\vspace{-0.5cm}
	\protect\caption{Combiner estimation error vs. SNR for a partially connected phase shifters network S\ref{scheme:PhaseSubArrays} with $G=3$. 
		\label{fig:schemes_vs_snr}}
\end{figure}

\subsection{Hardware Schemes Comparison}
The last simulation considered the different hardware choices. 
In Fig.~\ref{fig:schemes_vs_snr}, the performance of the MaGiQ combiner under different hardware constraints is demonstrated. As expected, the simple switching network S\ref{scheme:Switching} suffers from the largest MSE, as its connectivity is the most limited. The two partially connected cases S\ref{scheme:PhaseSubArrays} and S\ref{scheme:FlexiLast} with $G=5$ offer lower MSE, due to the additional hardware complexity, where the second performs better as a result of the additional switches. The two fully-connected architectures S\ref{scheme:FullyFirst} and S\ref{scheme:FullyPhaseOnly} show similar performance, suggesting that the additional switches in the first offer negligible improvement. This surprising result can be explained by the channel model which implies that the optimal combiner consists of sums of steering vectors, that can be efficiently described using unimodular vectors. 

\section{Conclusions}\label{sec:conclusions}

We developed a framework for hybrid precoder and combiner design, suitable for various channel models and hardware settings. 
We considered a data estimation problem and aimed at minimizing the estimation MSE over all possible hybrid precoders/combiners. 
For the precoder side, we suggested a family of iterative algorithms, Alt-MaG, that approximates the optimal fully-digital precoder while seeking the solution in the optimal set that results in the smallest approximation gap from the hybrid precoder of the previous iteration. The potential gain in exploiting the unitary degree of freedom in the fully-digital precoder was demonstrated in simulations. We further suggested a simple low complexity algorithm, MaGiQ, that achieves good approximation using a simple quantization function, and yields better performance than other methods suggested in the literature. We also showed that in some special cases it coincides with the optimal fully-digital solution.  

Next, we adjusted the MaGiQ algorithm so that it can be used for combiner design. We then suggested an additional greedy algorithm, GRTM, that directly minimizes the MSE using a simple scalar-ratio objective at each iteration. GRTM achieves lower MSE than MaGiQ when the number of RF chains increases. Experimental results showed that our low-complexity algorithms enjoy good performance in various scenarios.

Finally, using simulations, we showed that in partially connected schemes and sparse multipath mmWave channels, adding switches to the analog network offers a large increase in performance, while in fully-connected cases the improvement is negligible.

\small
\bibliographystyle{IEEEtran}
\bibliography{CombinerBib}

\begin{thebibliography}{10}
\providecommand{\url}[1]{#1}
\csname url@samestyle\endcsname
\providecommand{\newblock}{\relax}
\providecommand{\bibinfo}[2]{#2}
\providecommand{\BIBentrySTDinterwordspacing}{\spaceskip=0pt\relax}
\providecommand{\BIBentryALTinterwordstretchfactor}{4}
\providecommand{\BIBentryALTinterwordspacing}{\spaceskip=\fontdimen2\font plus
\BIBentryALTinterwordstretchfactor\fontdimen3\font minus
  \fontdimen4\font\relax}
\providecommand{\BIBforeignlanguage}[2]{{%
\expandafter\ifx\csname l@#1\endcsname\relax
\typeout{** WARNING: IEEEtran.bst: No hyphenation pattern has been}%
\typeout{** loaded for the language `#1'. Using the pattern for}%
\typeout{** the default language instead.}%
\else
\language=\csname l@#1\endcsname
\fi
#2}}
\providecommand{\BIBdecl}{\relax}
\BIBdecl

\bibitem{andrews_what_2014}
J.~G. Andrews, S.~Buzzi, W.~Choi, S.~V. Hanly, A.~Lozano, A.~C.~K. Soong, and
  J.~C. Zhang, ``What {will} 5g {be}?'' \emph{IEEE Journal on Selected Areas in
  Communications}, vol.~32, no.~6, pp. 1065--1082, Jun. 2014.

\bibitem{rusek_scaling_2013}
F.~Rusek, D.~Persson, {Buon Kiong Lau}, E.~G. Larsson, T.~L. Marzetta, and
  F.~Tufvesson, ``Scaling {up} {MIMO}: {opportunities} and {challenges} with
  {very} {large} {arrays},'' \emph{IEEE Signal Processing Magazine}, vol.~30,
  no.~1, pp. 40--60, Jan. 2013.

\bibitem{rappaport_millimeter_2013}
T.~S. Rappaport, {Shu Sun}, R.~Mayzus, {Hang Zhao}, Y.~Azar, K.~Wang, G.~N.
  Wong, J.~K. Schulz, M.~Samimi, and F.~Gutierrez, ``Millimeter {wave} {mobile}
  {communications} for 5g {cellular}: {it} {will} {work}!'' \emph{IEEE Access},
  vol.~1, pp. 335--349, 2013.

\bibitem{ayach_spatially_2014}
O.~E. Ayach, S.~Rajagopal, S.~Abu-Surra, Z.~Pi, and R.~W. Heath, ``Spatially
  {sparse} {precoding} in {millimeter} {wave} {MIMO} {systems},'' \emph{IEEE
  Transactions on Wireless Communications}, vol.~13, no.~3, pp. 1499--1513,
  Mar. 2014.

\bibitem{alkhateeb_hybrid_2013}
A.~Alkhateeb, O.~El~Ayach, G.~Leus, and R.~W. Heath, ``Hybrid precoding for
  millimeter wave cellular systems with partial channel knowledge,'' in
  \emph{Information {Theory} and {Applications} {Workshop} ({ITA}),
  2013}.\hskip 1em plus 0.5em minus 0.4em\relax IEEE, 2013, pp. 1--5.

\bibitem{lee_hybrid_2015}
Y.-Y. Lee, C.-H. Wang, and Y.-H. Huang, ``A {hybrid} {RF}/{baseband}
  {precoding} {processor} {based} on {parallel}-{index}-{selection}
  {matrix}-{inversion}-{bypass} {simultaneous} {orthogonal} {matching}
  {pursuit} for {millimeter} {wave} {MIMO} {systems},'' \emph{IEEE Transactions
  on Signal Processing}, vol.~63, no.~2, pp. 305--317, Jan. 2015.

\bibitem{han_large-scale_2015}
S.~Han, I.~Chih-Lin, Z.~Xu, and C.~Rowell, ``Large-scale antenna systems with
  hybrid analog and digital beamforming for millimeter wave 5g,'' \emph{IEEE
  Communications Magazine}, vol.~53, no.~1, pp. 186--194, 2015.

\bibitem{mendez-rial_hybrid_2016}
R.~Mendez-Rial, C.~Rusu, N.~Gonzalez-Prelcic, A.~Alkhateeb, and R.~W. Heath,
  ``Hybrid {MIMO} {architectures} for {millimeter} {wave} {communications}:
  {phase} {shifters} or {switches}?'' \emph{IEEE Access}, vol.~4, pp. 247--267,
  2016.

\bibitem{kim_mse-based_2015}
M.~Kim and Y.~H. Lee, ``{MSE}-{based} {hybrid} {RF}/{baseband} {processing} for
  {millimeter}-{wave} {communication} {systems} in {MIMO} {interference}
  {channels},'' \emph{IEEE Transactions on Vehicular Technology}, vol.~64,
  no.~6, pp. 2714--2720, Jun. 2015.

\bibitem{yu_alternating_2016}
X.~Yu, J.-C. Shen, J.~Zhang, and K.~B. Letaief, ``Alternating {minimization}
  {algorithms} for {hybrid} {precoding} in {millimeter} {wave} {MIMO}
  {systems},'' \emph{IEEE Journal of Selected Topics in Signal Processing},
  vol.~10, no.~3, pp. 485--500, Apr. 2016.

\bibitem{singh_feasibility_2015}
J.~Singh and S.~Ramakrishna, ``On the {feasibility} of {codebook}-{based}
  {beamforming} in {millimeter} {wave} {systems} {with} {multiple} {antenna}
  {arrays},'' \emph{IEEE Transactions on Wireless Communications}, vol.~14,
  no.~5, pp. 2670--2683, May 2015.

\bibitem{yu_partially-connected_2017}
X.~Yu, J.~Zhang, and K.~B. Letaief, ``Partially-{connected} {hybrid}
  {precoding} in mm-{Wave} {systems} {with} {dynamic} {phase} {shifter}
  {networks},'' \emph{arXiv preprint arXiv:1705.00859}, 2017.

\bibitem{park_dynamic_2017}
S.~Park, A.~Alkhateeb, and R.~W. Heath, ``Dynamic {subarrays} for {hybrid}
  {precoding} in {wideband} {mmWave} {MIMO} {systems},'' \emph{IEEE
  Transactions on Wireless Communications}, vol.~16, no.~5, pp. 2907--2920, May
  2017.

\bibitem{CSBook}
Y.~{C. Eldar} and G.~Kutyniok, \emph{Compressed sensing: theory and
  applications}.\hskip 1em plus 0.5em minus 0.4em\relax Cambridge University
  Press, 2012.

\bibitem{scaglione_optimal_2002}
A.~Scaglione, P.~Stoica, S.~Barbarossa, G.~Giannakis, and H.~Sampath, ``Optimal
  designs for space-time linear precoders and decoders,'' \emph{IEEE
  Transactions on Signal Processing}, vol.~50, no.~5, pp. 1051--1064, May 2002.

\bibitem{Yu:2003:MSC:946247.946658}
S.~X. Yu and J.~Shi, ``Multiclass spectral clustering,'' in \emph{Proceedings
  of the Ninth IEEE International Conference on Computer Vision - Volume 2},
  ser. ICCV '03.\hskip 1em plus 0.5em minus 0.4em\relax Washington, DC, USA:
  IEEE Computer Society, 2003.

\bibitem{sayeed_deconstructing_2002}
A.~Sayeed, ``\BIBforeignlanguage{en}{Deconstructing multiantenna fading
  channels},'' \emph{\BIBforeignlanguage{en}{IEEE Transactions on Signal
  Processing}}, vol.~50, no.~10, pp. 2563--2579, Oct. 2002.

\bibitem{smulders_characterisation_1997}
P.~F.~M. Smulders and L.~M. Correia, ``Characterisation of propagation in 60
  {GHz} radio channels,'' \emph{Electronics \& communication engineering
  journal}, vol.~9, no.~2, pp. 73--80, 1997.

\bibitem{xu_spatial_2002}
H.~Xu, V.~Kukshya, and T.~S. Rappaport, ``Spatial and temporal characteristics
  of 60-{GHz} indoor channels,'' \emph{IEEE Journal on selected areas in
  communications}, vol.~20, no.~3, pp. 620--630, 2002.

\bibitem{Petersen06thematrix}
K.~B. Petersen, M.~S. Pedersen, J.~Larsen, K.~Strimmer, L.~Christiansen,
  K.~Hansen, L.~He, L.~Thibaut, M.~Barão, S.~Hattinger, V.~Sima, and W.~The,
  ``The matrix cookbook,'' Tech. Rep., 2006.

\bibitem{wang_trace_2007}
H.~Wang, S.~Yan, D.~Xu, X.~Tang, and T.~Huang, ``Trace ratio vs. ratio trace
  for dimensionality reduction,'' in \emph{Computer {Vision} and {Pattern}
  {Recognition}, 2007. {CVPR}'07. {IEEE} {Conference} on}.\hskip 1em plus 0.5em
  minus 0.4em\relax IEEE, 2007, pp. 1--8.

\bibitem{spawc_paper}
S.~S. Ioushua and Y.~C. Eldar, ``Pilot contamination mitigation with reduced
  {RF} chains,'' \emph{Signal Processing Advances in Wireless Communications
  (SPAWC)}, July 2017.

\bibitem{bjornson_framework_2010}
E.~Bjornson and B.~Ottersten, ``A {framework} for {training}-{based}
  {estimation} in {arbitrarily} {correlated} {rician} {MIMO} {channels} {with}
  {rician} {disturbance},'' \emph{IEEE Transactions on Signal Processing},
  vol.~58, no.~3, pp. 1807--1820, Mar. 2010.

\bibitem{tulino_impact_2005}
\BIBentryALTinterwordspacing
A.~Tulino, A.~Lozano, and S.~Verdu, ``\BIBforeignlanguage{en}{Impact of
  {Antenna} {Correlation} on the {Capacity} of {Multiantenna} {Channels}},''
  \emph{\BIBforeignlanguage{en}{IEEE Transactions on Information Theory}},
  vol.~51, no.~7, pp. 2491--2509, Jul. 2005. [Online]. Available:
  \url{http://ieeexplore.ieee.org/document/1459054/}
\BIBentrySTDinterwordspacing

\bibitem{AltMin}
N.~Boumal, B.~Mishra, P.~Absil, and R.~Sepulchre, ``Manopt, a matlab toolbox
  for optimization on manifolds,'' \emph{J. Mach. Learn. Research, vol. 15, pp.
  1455-C1459,}, Jan. 2014.

\end{thebibliography}

\end{document}